\newcommand{\RR}{\mathbb{R}}
\newcommand{\norms}[1]{{\lVert#1\rVert}^2}
\newcommand{\snorms}[1]{{\lVert#1\rVert}_2^2}
\newcommand{\rowmatrix}[4]{
\begin{pmatrix}
#1 & #2 \\
#3 & #4
\end{pmatrix}
}
\newcommand{\keywords}[1]{\emph{Keywords:} #1}
\newcommand{\rank}{\operatorname{rank}}
\newcommand{\card}[1]{\lvert{#1}\rvert}
\newcommand{\NN}{\mathbb{N}}
\newtheorem{theorem}{Theorem}
\newtheorem{prop}[theorem]{Proposition}
\newtheorem{lemma}[theorem]{Lemma}
\newtheorem{definition}[theorem]{Definition}
\newtheorem{alg}{Algorithm}
\def\reals{\mathbb{R}}
\def\eps{\epsilon}
\def\suchthat{\;:\;}
\def\given{\;|\;}
\newcommand{\tr}[1]{\operatorname{trace}\left(#1\right)}
\newcommand{\deter}[1]{\operatorname{det}\left(#1\right)}
\newcommand{\linspan}[1]{\operatorname{span}\left(#1\right)}
\newcommand{\vol}[1]{\operatorname{vol}\left(#1\right)}
\newcommand{\conv}[1]{\operatorname{conv}\left(#1\right)}
\newcommand{\norm}[1]{\left\|#1\right\|}
\newcommand{\frob}[1]{\left\|#1\right\|_{F}}
\newcommand{\prob}[1]{{\sf Pr} \left(#1\right)}
\newcommand{\expec}[1]{{\sf E} \left[#1\right]}
\newcommand{\abs}[1]{\left|#1\right|}
\newcommand{\size}[1]{\left|#1\right|}
\newcommand{\inner}[2]{\left\langle#1, #2\right\rangle}
\newcommand{\comment}[1]{{\sf\textcolor{blue}{#1}}}
\title{Efficient volume sampling for row/column subset selection}
\author{Amit Deshpande \\
Microsoft Research India \\
\url{amitdesh@microsoft.com}
\and
Luis Rademacher \\
Computer Science and Engineering \\
Ohio State University \\
\url{lrademac@cse.ohio-state.edu}}
\date{}
\begin{document}
\maketitle

\begin{abstract}
We give efficient algorithms for volume sampling, i.e., for picking $k$-subsets of the rows of any given matrix with probabilities proportional to the squared volumes of the simplices defined by them and the origin (or the squared volumes of the parallelepipeds defined by these subsets of rows).
This solves an open problem from the monograph on spectral algorithms by Kannan and Vempala (see Section $7.4$ of \cite{KV}, also implicit in \cite{BDM, DRVW}).

Our first algorithm for volume sampling $k$-subsets of rows from an $m$-by-$n$ matrix runs in $O(kmn^\omega \log n)$ arithmetic operations and a second variant of it for $(1+\eps)$-approximate volume sampling runs in $O(mn \log m \cdot k^{2}/\eps^{2} + m \log^{\omega} m \cdot k^{2\omega+1}/\eps^{2\omega} \cdot \log(k \eps^{-1} \log m))$ arithmetic operations, which is almost linear in the size of the input (i.e., the number of entries) for small $k$.

Our efficient volume sampling algorithms imply the following results for low-rank matrix approximation:
\begin{enumerate}
\item Given $A \in \reals^{m \times n}$, in $O(kmn^{\omega} \log n)$ arithmetic operations we can find $k$ of its rows such that projecting onto their span gives a $\sqrt{k+1}$-approximation to the matrix of rank $k$ closest to $A$ under the Frobenius norm. This improves the $O(k \sqrt{\log k})$-approximation of Boutsidis, Drineas and Mahoney \cite{BDM} and matches the lower bound shown in \cite{DRVW}. The method of conditional expectations gives a \emph{deterministic} algorithm with the same complexity. The running time can be improved to $O(mn \log m \cdot k^{2}/\eps^{2} + m \log^{\omega} m \cdot k^{2\omega+1}/\eps^{2\omega}  \cdot \log(k \eps^{-1} \log m))$ at the cost of losing an extra $(1+\eps)$ in the approximation factor.
\item The same rows and projection as in the previous point give a $\sqrt{(k+1)(n-k)}$-approximation to the matrix of rank $k$ closest to $A$ under the spectral norm. In this paper, we show an almost matching lower bound of $\sqrt{n}$, even for $k=1$.
\end{enumerate}
\end{abstract}

\keywords{volume sampling, low-rank matrix approximation, row/column subset selection}

\section{Introduction}

Volume sampling, i.e., picking $k$-subsets of the rows of any given matrix with probabilities proportional to the squared volumes of the simplicies defined by them, was introduced in \cite{DRVW} in the context of low-rank approximation of matrices. It is equivalent to sampling $k$-subsets of $\{1, \dotsc, m\}$ with probabilities proportional to the corresponding $k$ by $k$ principal minors of any given $m$ by $m$ positive semidefinite matrix.

In the context of low-rank approximation, volume sampling is related to a problem called \emph{row/column-subset selection} \cite{BDM}. Most large data sets that arise in search, microarray experiments, computer vision, data mining etc. can be thought of as matrices where rows and columns are indexed by objects and features, respectively (or vice versa), and we need to pick a small subset of features that are dominant. For example, while studying gene expression data biologists want a small subset of genes that are responsible for a particular disease. Usual dimension reduction techniques such as principal component analysis (PCA) or random projection fail to do this as they typically output singular vectors or random vectors which are linear combinations of a large number of feature vectors. A recent article by Mahoney and Drineas \cite{MD} highlights the limitations of PCA and gives experimental data on practical applications of low-rank approximation based on row/column-subset selection.

\section{Row/column-subset selection and volume sampling}
While dealing with large matrices in practice, we seek smaller or low-dimensional representations of them which are close to them but can be computed and stored efficiently. A popular notion for low-dimensional representation of matrices is low-rank matrices, and the most popular metrics used to measure the closeness of two matrices are the Frobenius or Hilbert-Schmidt norm (i.e., the square root of the sum of squares of entries of their difference) and the spectral norm (i.e., the largest singular value of their difference). The singular value decomposition (SVD) tells us that any matrix $A \in \reals^{m \times n}$ can be written as
\[
A = \sum_{i=1}^{m} \sigma_{i} u_{i} v_{i}^{T},
\]
where $\sigma_{1} \geq \dotsc \geq \sigma_{m} \geq 0$, $u_{i} \in \reals^{m}$ are orthonormal and $v_{i} \in \reals^{n}$ are orthonormal. Moreover, the nearest rank-$k$ matrix to $A$, let us call it $A_{k}$, under both the Frobenius and the spectral norm, is given by
\[
A_{k} = \sum_{i=1}^{k} \sigma_{i} u_{i} v_{i}^{T}.
\]
In other words, the rows of $A_{k}$ are projections of the rows of $A$ onto $\linspan{v_{i} \suchthat 1 \leq i \leq k}$. Because of this, most dimension reduction techniques based on the singular value decomposition, e.g., principal component analysis (PCA), are interpreted as giving $v_{i}$'s as the \emph{dominant} vectors, which happen to be linear combinations of a large number of the rows or feature vectors of $A$.

The \emph{row-subset selection} problem we consider in this paper is: Can we pick a $k$-subset of the rows (or feature vectors) of $A \in \reals^{m \times n}$ so that projecting onto their span is almost as good as projecting onto $\linspan{v_{i} \suchthat 1 \leq i \leq k}$?

Several row-sampling techniques have been considered in the past as an approximate but faster alternative to the singular value decomposition, in the context of streaming algorithms and large data sets that cannot be stored in random access memory \cite{FKV,DMM,DRVW}. The first among these is the squared-length sampling of rows introduced by Frieze, Kannan and Vempala \cite{FKV}. Another sampling scheme due to Drineas, Mahoney and Muthukrishnan \cite{DMM} uses the singular values and singular vectors to decide the sampling probabilities. Later Deshpande, Rademacher, Vempala and Wang \cite{DRVW} introduced volume sampling as a generalization of squared-length sampling.
\begin{definition} \label{def:vol-sampling}
Given $A \in \reals^{m \times n}$, volume sampling is defined as picking a $k$-subset $S$ of $[m]$ with probability proportional to
\[
\deter{A_{S} A_{S}^{T}} = \left(k! \cdot \vol{\conv{\{\bar{0}\} \cup \{a_{i} \suchthat i \in S\}}}\right)^{2},
\]
where $a_{i}$ denotes the $i$-th row of $A$, $A_{S} \in \reals^{k \times n}$ denotes the row-submatrix of $A$ given by rows with indices $i \in S$, and $\conv{\cdot}$ denotes the convex hull.
\end{definition}

The application of volume sampling to low-rank approximation and, more importantly, to the \emph{row-subset selection} problem, is given by the following theorem shown in \cite{DRVW}. It says that picking a subset of $k$ rows according to volume sampling and projecting all the rows of $A$ onto their span gives a $(k+1)$-approximation to the nearest rank-$k$ matrix to $A$.
\begin{theorem} \cite{DRVW} \label{thm:vol-sampling}
Given any $A \in \reals^{m \times n}$,
\[
\expec{\frob{A - \pi_{S}(A)}^{2}} \leq (k+1) \frob{A - A_{k}}^{2},
\]
when $S$ is picked according to volume sampling, $\pi_{S}(A) \in \reals^{m \times n}$ denotes the matrix obtained by projecting all the rows of $A$ onto $\linspan{a_{i} \suchthat i \in S}$, and $A_{k}$ is the matrix of rank $k$ closest to $A$ under the Frobenius norm.
\end{theorem}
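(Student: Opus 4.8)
The plan is to compute the expectation exactly and only then bound it. First I would write $\frob{A - \pi_S(A)}^2 = \sum_{j=1}^m d(a_j, H_S)^2$, where $H_S = \linspan{a_i \suchthat i \in S}$ and $d(\cdot,\cdot)$ is Euclidean distance, since the projection $\pi_S$ acts row by row; note that the rows $a_j$ with $j \in S$ contribute nothing, as they already lie in $H_S$. The geometric fact I would lean on is the base-times-height identity for parallelepipeds: for $j \notin S$,
\[
\deter{A_{S \cup \{j\}} A_{S \cup \{j\}}^T} = \deter{A_S A_S^T} \cdot d(a_j, H_S)^2 .
\]
Writing $Z_k = \sum_{\card{S}=k} \deter{A_S A_S^T}$ for the normalizing constant, this converts each distance-weighted base volume into a single $(k+1)$-subset volume.

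Next I would carry out the counting. Multiplying the sampling probability $\deter{A_S A_S^T}/Z_k$ by the error and summing, the terms with $j \in S$ drop out and
\[
\expec{\frob{A - \pi_S(A)}^2} = \frac{1}{Z_k} \sum_{\card{S}=k} \sum_{j \notin S} \deter{A_{S \cup \{j\}} A_{S \cup \{j\}}^T} .
\]
Each $(k+1)$-subset $U$ arises as $S \cup \{j\}$ with $j \notin S$ in exactly $k+1$ ways, by picking which of its elements plays the role of $j$, so the double sum collapses to $(k+1) Z_{k+1}$, giving the clean formula $\expec{\frob{A - \pi_S(A)}^2} = (k+1) Z_{k+1}/Z_k$.

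It then remains to show $Z_{k+1}/Z_k \le \frob{A - A_k}^2 = \sum_{i > k} \sigma_i^2$. Here I would invoke Cauchy--Binet, equivalently the fact that $\deter{A_S A_S^T}$ is the principal minor of $A A^T$ indexed by $S$, to identify $Z_k = e_k(\sigma_1^2, \dots, \sigma_m^2)$, the $k$-th elementary symmetric polynomial in the squared singular values. To bound the ratio $e_{k+1}/e_k$, I would index each $(k+1)$-subset $T$ by its largest element $j = \max T \ge k+1$, factor $\prod_{i \in T} \sigma_i^2 = \sigma_j^2 \prod_{i \in T \setminus \{j\}} \sigma_i^2$, and bound the resulting inner sum over $k$-subsets of $\{1, \dots, j-1\}$ by the full $e_k$; summing $\sigma_j^2$ over $j > k$ then yields $e_{k+1} \le \bigl(\sum_{i>k} \sigma_i^2\bigr) e_k$, which closes the argument.

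The geometry and combinatorics of the first two paragraphs are essentially forced, and I expect the real content to be the final inequality $e_{k+1}/e_k \le \sum_{i>k}\sigma_i^2$. The subtlety there is that the crude bound $\sigma_{k+1}^2$ times the number of terms is far too weak; one must exploit the decreasing order of the singular values, and anchoring each $(k+1)$-subset at its largest index is precisely what makes the estimate tight enough. That is the step I would be most careful to set up correctly.
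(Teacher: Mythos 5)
Your proof is correct, but note that there is nothing in the paper to compare it against: Theorem \ref{thm:vol-sampling} is imported from \cite{DRVW} and never reproved here. What you have written is essentially the original argument of \cite{DRVW}, and it lines up with the machinery this paper does develop for its algorithms. Your base-times-height identity $\deter{A_{S \cup \{j\}} A_{S \cup \{j\}}^T} = \deter{A_S A_S^T}\, d(a_j, H_S)^2$ is exactly the $T=\{j\}$ case of Lemma \ref{lemma:det-division}; the identification $Z_k = e_k(\sigma_1^2, \dotsc, \sigma_m^2)$ via Cauchy--Binet is Lemma \ref{lemma:char-poly}; and your exact formula $\expec{\frob{A - \pi_S(A)}^2} = (k+1) Z_{k+1}/Z_k$ is precisely the $t=1$ (unconditional) case of the paper's Lemma \ref{lemma:conditional}, which generalizes it to conditional expectations for the derandomization. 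The one ingredient outside the paper's toolkit is the estimate $e_{k+1} \leq \bigl(\sum_{i>k} \sigma_i^2\bigr) e_k$, and you correctly identified it as the real content. Your anchoring argument for it is sound: the map $T \mapsto (\max T,\, T \setminus \{\max T\})$ is a bijection from $(k+1)$-subsets onto pairs $(j, T')$ with $j \geq k+1$, $T' \subseteq [j-1]$, $\card{T'} = k$, the inner sum is bounded by the full $e_k$ using only nonnegativity of the $\sigma_i^2$, and the decreasing order of the singular values is what identifies the resulting factor $\sum_{j \geq k+1} \sigma_j^2$ with $\frob{A - A_k}^2$.
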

As we will see later, this easily implies
\[
\expec{\norm{A - \pi_{S}(A)}_{2}} \leq \sqrt{(k+1)(n-k)} \norm{A - A_{k}}_{2}.
\]
Theorem \ref{thm:vol-sampling} gives only an existence result for row-subset selection and we also know a matching lower bound that says this is the best we can possibly do.
\begin{theorem} \cite{DRVW} \label{thm:DRVW-lower}
For any $\eps > 0$, there exists a matrix $A \in \reals^{(k+1) \times k}$ such that picking any $k$-subset $S$ of its rows gives
\[
\frob{A - \pi_{S}(A)} \geq (1-\eps) \sqrt{k+1} \frob{A - A_{k}}.
\]
\end{theorem}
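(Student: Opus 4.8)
The plan is to exhibit a single, highly symmetric matrix and then use its symmetry to reduce the ``for every $k$-subset'' claim to one distance computation. Concretely, I would take the $k+1$ rows to be a regular simplex that has been lifted slightly out of its affine hull, which is cleanest to write as
\[
A = I_{k+1} - \frac{1-\eps}{k+1}\, \mathbf{1}\mathbf{1}^{T},
\]
where $\mathbf{1} \in \reals^{k+1}$ is the all-ones vector. Since $A$ is symmetric, its eigenvectors are singular vectors: $\mathbf{1}$ is an eigenvector with eigenvalue $1-\frac{1-\eps}{k+1}(k+1) = \eps$, and every vector orthogonal to $\mathbf{1}$ is an eigenvector with eigenvalue $1$. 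Thus the singular values are $1$ (with multiplicity $k$) and $\eps$, so $A_k$ is obtained by deleting the $\mathbf{1}$-direction and $\frob{A - A_k}^2 = \eps^2$. As $\eps \to 0$ the rows $a_i = e_i - \frac{1-\eps}{k+1}\mathbf{1}$ collapse into the hyperplane $\mathbf{1}^{\perp}$, where they form a genuine regular $k$-simplex; the parameter $\eps$ is exactly the small lift into the remaining direction that keeps $\frob{A - A_k}$ positive, which is what makes the bound non-vacuous.

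Next I would exploit that $A$ is invariant, up to simultaneously permuting rows and columns, under the full symmetric group on the $k+1$ indices (a permutation matrix $P$ fixes $\mathbf{1}$, so $PAP^{T} = A$). This makes all $k$-subsets $S$ equivalent, so it suffices to analyze one of them, say $S = [k+1] \setminus \{j\}$. The key simplification is that $A$ has only $k+1$ rows, so dropping row $j$ leaves every remaining row inside $V_S := \linspan{a_i \suchthat i \in S}$; hence all terms of $\frob{A - \pi_S(A)}^2$ vanish except the one coming from the deleted row, giving $\frob{A - \pi_S(A)}^2 = \operatorname{dist}(a_j, V_S)^2$. I would then compute this distance by finding the one-dimensional normal direction $n$ to $V_S$ from the $k$ linear equations $\inner{n}{a_i} = 0$, $i \in S$; by symmetry $n \propto (c,\dots,c,1-kc)$ with $c = \frac{1-\eps}{k+1}$, and $\operatorname{dist}(a_j, V_S)^2 = \inner{n}{a_j}^2 / \norm{n}^2$.

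Carrying out this evaluation is the heart of the proof: the numerator simplifies to $(1 - c(k+1))^2 = \eps^2$, while $\norm{n}^2 = kc^2 + (1-kc)^2 \to \frac{1}{k+1}$ as $\eps \to 0$. Hence $\frob{A - \pi_S(A)}^2 = \eps^2/\norm{n}^2 \to (k+1)\,\eps^2 = (k+1)\,\frob{A - A_k}^2$, so the ratio $\frob{A - \pi_S(A)}/\frob{A - A_k}$ tends to $\sqrt{k+1}$ from below, and choosing $\eps$ small enough yields the factor $(1-\eps)\sqrt{k+1}$ for every $S$. The main obstacles I anticipate are in the bookkeeping rather than in the idea: justifying the symmetry reduction rigorously, so that one subset really does control all of them, and handling the single-term distance formula and the limit cleanly, for instance by keeping $\eps$ explicit throughout and only passing to the limit at the end. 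A useful sanity check along the way is the identity $A - A_k = \frac{\eps}{k+1}\mathbf{1}\mathbf{1}^{T}$, whose rows are all equal to $\frac{\eps}{k+1}\mathbf{1}$: the error is spread uniformly over the rows, which is precisely why deleting any single one forfeits a $(k+1)$ factor.
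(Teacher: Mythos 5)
This theorem is not proved in the paper at all: it is imported verbatim from \cite{DRVW}, so there is no internal proof to compare against. Your lifted-regular-simplex matrix is essentially the original DRVW construction, and your argument (symmetry reduction to one subset, single-row distance via the normal vector) is correct. Two remarks. First, as literally stated the theorem is vacuous: any $A \in \reals^{(k+1) \times k}$ has $\rank(A) \leq k$, hence $A_{k} = A$ and the right-hand side is zero. The stated dimensions are evidently a typo, and your $(k+1) \times (k+1)$ matrix is the right fix --- indeed the minimal one, since any non-vacuous witness must have rank exactly $k+1$ and therefore at least $k+1$ columns. Second, you can avoid the limiting argument and the implicit renaming of $\eps$ by keeping the computation exact: with $c = \frac{1-\eps}{k+1}$ one gets
\[
\norm{n}^{2} = kc^{2} + (1-kc)^{2} = \frac{1 + k\eps^{2}}{k+1},
\qquad
\frob{A - \pi_{S}(A)}^{2} = \frac{\eps^{2}}{\norm{n}^{2}} = \frac{(k+1)\,\eps^{2}}{1 + k\eps^{2}},
\]
so the ratio is exactly $\sqrt{(k+1)/(1+k\eps^{2})}$, which is at least $(1-\eps)\sqrt{k+1}$ already for $\eps \leq 1/(k+1)$; this makes the ``choose the lift small enough'' step quantitative and closes the only loose end in your write-up.
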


However, no efficient algorithm was known for volume sampling prior to this work. An algorithm mentioned in Deshpande and Vempala \cite{DV} does $k!$-approximate volume sampling in time $O(kmn)$, which means that plugging it in Theorem \ref{thm:vol-sampling} can only guarantee $(k+1)!$-approximation instead of $(k+1)$. Finding an efficient algorithm for volume sampling is mentioned as an open problem in the recent monograph on spectral algorithms by Kannan and Vempala (see Section $7.4$ of \cite{KV}).

Boutsidis, Drineas and Mahoney \cite{BDM} gave an alternative approach to row-subset selection (without going through volume sampling) and here is a re-statement of the main theorem from their paper which uses columns instead of rows.


\begin{theorem} \cite{BDM} \label{thm:bdm-soda}
For any $A \in \reals^{m \times n}$, a $k$-subset $S$ of its rows can be found in time $O\left(\min\{mn^{2}, m^{2}n\}\right)$ such that
\begin{align*}
\frob{A - \pi_{S}(A)} & = O(k \sqrt{\log k}) \frob{A - A_{k}} \\
\norm{A - \pi_{S}(A)}_{2} & = O\left(k^{3/4} (n-k)^{1/4} \sqrt{\log k}\right) \norm{A - A_{k}}_{2}.
\end{align*}
\end{theorem}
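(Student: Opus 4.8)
Since this is the main theorem of \cite{BDM}, I will only sketch how I would re-derive it. The plan is to follow the now-standard two-stage structure for subset selection: randomized oversampling followed by deterministic pruning. First I would reduce everything to a single quantity measuring how well the chosen rows span the dominant singular directions. Let $U_k \in \reals^{m\times k}$ collect the top-$k$ left singular vectors of $A$, and for a subset $S$ let $S \in \reals^{m\times\size{S}}$ also denote the $0/1$ matrix that extracts the rows indexed by $S$. Expanding $\pi_S(A)$ in the SVD basis, I would first prove a structural lemma bounding $\frob{A - \pi_S(A)}$ and $\norm{A - \pi_S(A)}_2$ in terms of $\frob{A - A_k}$, $\norm{A - A_k}_2$, and the smallest singular value $\sigma_k(U_k^T S)$; the whole difficulty then concentrates in keeping $\sigma_k(U_k^T S)$ bounded away from $0$ while controlling the residual, since a well-conditioned $U_k^T S$ means the selected rows nearly contain $\linspan{v_1,\dotsc,v_k}$.

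The two stages produce such an $S$. In the randomized stage I would sample $c = O(k\log k)$ rows with probabilities proportional to the leverage scores $\ell_i = \norms{(U_k)_{i,:}}$. A subspace-sampling / approximate-matrix-multiplication argument (a second-moment bound followed by Markov, or sharper Rudelson--Vershynin concentration) gives, with constant probability, simultaneously $\sigma_k(U_k^T S_1) = \Omega(1)$ and an $O(1)$-factor control of the sampled tail mass $\frob{S_1^T(A - A_k)}$. The deterministic stage then prunes these $O(k\log k)$ candidate rows down to exactly $k$ using a strong rank-revealing QR factorization in the sense of Gu--Eisenstat, selecting a sub-subset $S_2 \subseteq S_1$ with $\sigma_k(U_k^T S_2) \ge \sigma_k(U_k^T S_1)/\mathrm{poly}(k)$. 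Substituting both bounds into the structural lemma and optimizing the oversampling parameter $c$ yields the $O(k\sqrt{\log k})$ Frobenius factor: the $\sqrt{\log k}$ is the price of the randomized stage and the polynomial in $k$ is the rank-revealing-QR overhead.

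For the running time I would observe that both stages are dominated by the cost of obtaining $U_k$ (equivalently, the leading singular subspace), namely $O(\min\{mn^2, m^2n\})$; the sampling and the QR on a matrix with $O(k\log k)$ columns are lower order. The spectral-norm bound $O(k^{3/4}(n-k)^{1/4}\sqrt{\log k})$ I would derive from the same selected $S$ by pushing the spectral norm through the structural lemma and using $\norm{M}_2 \le \frob{M} \le \sqrt{\rank(M)}\,\norm{M}_2$ on the rank-$(n-k)$ residual, then balancing its Frobenius and spectral contributions. The step I expect to be the main obstacle is the structural lemma combined with the simultaneous control, across both stages, of $\sigma_k(U_k^T S)$ and the residual mass: keeping the smallest singular value large while preventing the sampled tail from blowing up is precisely the tension that forces the oversampling up to $O(k\log k)$ and that determines the final approximation factor.
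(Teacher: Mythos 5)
This statement is not proved in the paper at all: it is a verbatim restatement of the main result of Boutsidis, Drineas and Mahoney, cited as \cite{BDM} purely for comparison with the paper's own Theorem \ref{thm:det-sub-select}, so there is no in-paper proof to compare against. Judged against the actual argument in \cite{BDM}, your sketch reconstructs it faithfully: that proof is exactly the two-stage scheme you describe --- a randomized stage sampling $O(k \log k)$ rows with probabilities given by leverage scores of the top-$k$ singular subspace, followed by a deterministic Gu--Eisenstat rank-revealing QR pruning down to exactly $k$ rows, combined through structural lemmas that bound the residual in terms of the smallest singular value of the sampled-and-rescaled singular-vector matrix, with the running time dominated by the SVD and the $O(k\sqrt{\log k})$ factor arising from the interplay of the oversampling parameter $c = O(k\log k)$ and the $\sqrt{k(c-k)+1}$-type RRQR loss. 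Two minor points worth noting: the guarantee in \cite{BDM} holds with constant probability (the restated theorem suppresses this), and one should be careful that the pruning stage in \cite{BDM} is applied to the rescaled sampled matrix, not the raw $0/1$ selection, which is what makes the two stages compose cleanly.
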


Row/column-subset selection problem is related to rank-revealing decompositions considered in linear algebra \cite{GE,P}, and the previous best algorithmic result for row-subset selection in the spectral norm case was given by a result of Gu and Eisenstat \cite{GE} on strong rank-revealing QR decompositions. The following theorem is a direct consequence of \cite{GE} as pointed out in \cite{BDM}.
\begin{theorem}
Given $A \in \reals^{m \times n}$, an integer $k \leq n$ and $f \geq 1$, there exists a $k$-subset $S$ of the columns of $A$ such that
\[
\norm{A^{T} - \pi_{S}(A^{T})}_{2} \leq \sqrt{1 + f^{2} k(n-k)} \norm{A - A_{k}}_{2}.
\]
Moreover, this subset $S$ can be found in time $O\left((m + n \log_{f}n)n^{2}\right)$.
\end{theorem}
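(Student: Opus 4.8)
The plan is to invoke the strong rank-revealing QR (RRQR) factorization of Gu and Eisenstat \cite{GE} and to translate its guarantee on the trailing block into a bound on the projection residual. First I would apply their factorization to $A$ with the given integer $k$ and parameter $f$. This produces a permutation matrix $\Pi$ (whose first $k$ columns single out the $k$-subset $S$ of columns of $A$), together with a matrix $Q = (Q_1 \mid Q_2)$ with orthonormal columns, where $Q_1$ has $k$ columns, and an upper-triangular factor
\[
R = \begin{pmatrix} R_{11} & R_{12} \\ 0 & R_{22} \end{pmatrix}, \qquad A\Pi = QR,
\]
with $R_{11} \in \reals^{k \times k}$. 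The content of \cite{GE} that I would quote verbatim is the bound on the singular values of the trailing block, namely $\sigma_1(R_{22}) \leq \sqrt{1 + f^2 k(n-k)}\, \sigma_{k+1}(A)$; this is precisely the inequality that makes the factorization \emph{strong}.

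Next I would identify the projection residual with $R_{22}$. The $k$ selected columns of $A$ are the columns of $Q_1 R_{11}$, so (whenever $R_{11}$ is nonsingular) their span is exactly $\operatorname{range}(Q_1)$, and the orthogonal projection onto $\linspan{a_i \suchthat i \in S}$ is $P = Q_1 Q_1^T$. A one-line computation then gives $(I - P) A \Pi = (0 \mid Q_2 R_{22})$, using $(I-P)Q_1 = 0$ and $(I-P)Q_2 = Q_2$. Transposing, and using that $\proj_S(A^T) = A^T P$ together with the orthonormality of the columns of $Q_2$, I would conclude
\[
\norm{A^T - \proj_S(A^T)}_2 = \norm{(I-P)A}_2 = \norm{R_{22}}_2 = \sigma_1(R_{22}).
\]
Combining this identity with the singular-value bound above and with $\norm{A - A_k}_2 = \sigma_{k+1}(A)$ yields the claimed inequality.

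The running time $O\!\left((m + n\log_f n)n^2\right)$ comes directly from the cost of computing the strong RRQR factorization in \cite{GE}, so for that part I would simply cite their algorithm. The main obstacle is not the reduction, which is routine linear algebra, but the singular-value bound on $R_{22}$ itself: this is the heart of the strong RRQR guarantee and is exactly the nontrivial input I am borrowing from \cite{GE}. A secondary point to handle with care is the possible (near-)singularity of $R_{11}$ when $A$ has rank less than $k$; here the lower bound on $\sigma_k(R_{11})$ supplied by the strong factorization, together with the corresponding vanishing of $\sigma_{k+1}(A)$, keeps both sides of the inequality well defined, while the span of the chosen columns still coincides with $\operatorname{range}(Q_1)$, so the argument goes through unchanged.
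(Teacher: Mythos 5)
Your proposal is correct and takes the same route as the paper, which offers no independent proof of this statement: it presents the theorem as a direct consequence of the strong rank-revealing QR factorization of Gu and Eisenstat \cite{GE} (as pointed out in \cite{BDM}), and your reduction from the strong RRQR bound $\sigma_{1}(R_{22}) \leq \sqrt{1+f^{2}k(n-k)}\,\sigma_{k+1}(A)$ to the projection guarantee is exactly that derivation. One small correction to your degenerate-case remark: when $R_{11}$ is singular the span of the chosen columns is $Q_{1}\operatorname{range}(R_{11})$, which is a \emph{proper} subspace of $\operatorname{range}(Q_{1})$, but the bound survives because $\sigma_{k+1}(A)=0$ forces $R_{22}=0$, whence $\operatorname{rank}(R_{11})=\rank(A)$ (combining the strong RRQR lower bounds $\sigma_{i}(R_{11})\geq \sigma_{i}(A)/\sqrt{1+f^{2}k(n-k)}$ with $\rank(R_{11})\leq\rank(A)$), so the chosen columns already span the column space of $A$ and the left-hand side vanishes.
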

In the context of volume sampling, it is interesting to note that Pan \cite{P} has used an idea of picking submatrices of \emph{locally maximum volume} (or determinants) for rank-revealing matrix decompositions. We refer the reader to \cite{P} for details.

The results of Goreinov, Tyrtyshnikov and Zamarashkin \cite{GT,GTZ} on pseudo-skeleton approximations of matrices look at submatrices of maximum determinants as good candidates for row/column-subset selection.
\begin{theorem} \cite{GT}
If $A \in \reals^{m \times n}$ can be written as
\[
\left(\begin{array}{cc} A_{11} & A_{12} \\ A_{21} & A_{22} \end{array}\right),
\]
where $A_{11} \in \reals^{k \times k}$ is the $k$ by $k$ submatrix of $A$ of maximum determinant. Then,
\[
\max_{i,j} \abs{(A_{22} - A_{12}A_{11}^{-1}A_{21})_{ij}} \leq (k+1) \norm{A - A_{k}}_{2}.
\]
\end{theorem}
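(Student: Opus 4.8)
The plan is to reduce the bound on the whole Schur complement to a bound on a single entry, and then to express that entry as a ratio of determinants that the maximum-volume hypothesis controls. Assume $A$ has rank at least $k$, so that $\deter{A_{11}} \neq 0$ and the Schur complement is well defined (otherwise $A = A_k$ and both sides vanish). After permuting rows and columns we may put $A_{11}$ in the top-left corner, so that the $(i,j)$ entry of the Schur complement $S = A_{22} - A_{21}A_{11}^{-1}A_{12}$ of $A_{11}$ corresponds to row $k+i$ and column $k+j$ of $A$. The first step is the bordered-determinant identity: if $\hat{A}_{ij}$ denotes the $(k+1) \times (k+1)$ submatrix of $A$ obtained by bordering $A_{11}$ with the $i$-th row of the lower block, the $j$-th column of the right block, and the entry $(A_{22})_{ij}$, then $S_{ij} = \deter{\hat{A}_{ij}} / \deter{A_{11}}$. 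Thus it suffices to prove $\abs{\deter{\hat{A}_{ij}}} \leq (k+1)\, \norm{A - A_k}_2 \cdot \abs{\deter{A_{11}}}$ for every $i,j$.

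Next I would pass to singular values. Writing $\sigma_1(\hat{A}_{ij}) \geq \dotsb \geq \sigma_{k+1}(\hat{A}_{ij})$ for the singular values of the bordered matrix, we have $\abs{\deter{\hat{A}_{ij}}} = \prod_{l=1}^{k+1} \sigma_l(\hat{A}_{ij})$, and I would split off the smallest factor. Since $\hat{A}_{ij}$ is obtained from $A$ by deleting rows and columns, singular-value interlacing gives $\sigma_{k+1}(\hat{A}_{ij}) \leq \sigma_{k+1}(A) = \norm{A - A_k}_2$, the last equality being the spectral-norm version of the Eckart–Young theorem. It then remains to show that the product of the top $k$ singular values is not too large, namely $\prod_{l=1}^{k} \sigma_l(\hat{A}_{ij}) \leq (k+1)\, \abs{\deter{A_{11}}}$, since multiplying the two bounds yields exactly the desired factor $(k+1)$.

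The key step, and the place where the maximum-volume hypothesis enters, is this last inequality, which I would obtain through the $k$-th compound (exterior power) matrix. The product $\prod_{l=1}^{k}\sigma_l(\hat{A}_{ij})$ is the operator norm of $\wedge^k \hat{A}_{ij}$, which is at most its Frobenius norm, and by Cauchy–Binet $\norm{\wedge^k \hat{A}_{ij}}_F^2$ equals the sum of the squares of all $k \times k$ minors of $\hat{A}_{ij}$. A $(k+1) \times (k+1)$ matrix has exactly $(k+1)^2$ such minors, and each one is the determinant of a $k \times k$ submatrix of $A$; since $A_{11}$ has maximum absolute determinant among all $k \times k$ submatrices, each minor is at most $\abs{\deter{A_{11}}}$ in absolute value. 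Hence $\norm{\wedge^k \hat{A}_{ij}}_F^2 \leq (k+1)^2 \deter{A_{11}}^2$, giving $\prod_{l=1}^{k}\sigma_l(\hat{A}_{ij}) \leq (k+1)\,\abs{\deter{A_{11}}}$ as needed. Combining the three estimates yields $\abs{S_{ij}} \leq (k+1)\,\norm{A - A_k}_2$ for every $i,j$, which is the claim.

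I expect the main obstacle to be pinning down the constant as precisely $(k+1)$ rather than something larger. The count of $(k+1)^2$ minors together with the Frobenius-dominates-operator-norm step conspires so that the square root of $(k+1)^2$ is exactly $k+1$; getting this right requires care that no extra dimensional factor creeps in, and that the interlacing inequalities are applied in the correct direction (the top $k$ singular values are bounded \emph{above} via the compound-matrix estimate, while $\sigma_{k+1}$ is bounded above via deletion interlacing). An alternative route decomposes $\abs{\deter{\hat{A}_{ij}}}$ as the $k$-volume of the bordered rows times the distance of the remaining row to their span and bounds each factor by $\sqrt{k+1}$, but the compound-matrix argument is cleaner and produces the sharp constant directly.
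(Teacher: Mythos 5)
Your proof is correct, but note that this theorem is stated in the paper without proof: it is quoted from Goreinov and Tyrtyshnikov \cite{GT}, so the only possible comparison is against their published argument, not against anything in this paper. Your route is, in essence, a compound-matrix reformulation of theirs. Both arguments start the same way: pass to the bordered $(k+1)\times(k+1)$ submatrix $\hat{A}_{ij}$, write the Schur-complement entry as $\deter{\hat{A}_{ij}}/\deter{A_{11}}$ (you also silently corrected the statement's typo $A_{12}A_{11}^{-1}A_{21}$, whose dimensions do not compose, to $A_{21}A_{11}^{-1}A_{12}$), and control $\sigma_{k+1}(\hat{A}_{ij}) \leq \sigma_{k+1}(A) = \norm{A-A_{k}}_{2}$ by deletion interlacing plus Eckart--Young. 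The difference is the middle estimate. Goreinov and Tyrtyshnikov argue via the inverse: $1/\abs{S_{ij}}$ is the modulus of the $(k+1,k+1)$ entry of $\hat{A}_{ij}^{-1}$, which by maximality of $\deter{A_{11}}$ among the cofactors is the largest entry of $\hat{A}_{ij}^{-1}$, hence at least $\norm{\hat{A}_{ij}^{-1}}_{2}/(k+1) = 1/\bigl((k+1)\sigma_{k+1}(\hat{A}_{ij})\bigr)$. You instead bound the product of the top $k$ singular values by $\frob{\wedge^{k}\hat{A}_{ij}} \leq (k+1)\abs{\deter{A_{11}}}$. These are the same estimate in different clothing: for a $(k+1)\times(k+1)$ matrix the $k$-th compound is the adjugate up to signs and transposition, and your Frobenius-versus-operator-norm step is exactly their max-entry-versus-spectral-norm step, which is why both land on the sharp constant $k+1$. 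Your formulation has the mild advantage of not requiring $\hat{A}_{ij}$ to be invertible (the compound bound holds even when $\deter{\hat{A}_{ij}}=0$, where their inverse-based phrasing needs a separate, trivial case). The one loose phrase in your write-up is the degenerate case: when $\rank(A) < k$ the left-hand side is not ``zero'' but undefined, since $A_{11}^{-1}$ does not exist; the theorem is simply vacuous there, and it is cleaner to say so.
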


Because of this relation between row/column-subset selection and the related ideas about picking submatrices of maximum volume, \c{C}ivril and Magdon-Ismail \cite{CM1,CM2} looked at the problem of picking a $k$-subset $S$ of rows of a given matrix $A \in \reals^{m \times n}$ such that $\deter{A_{S} A_{S}^{T}}$ is maximized. They show that this problem is NP-hard \cite{CM1} and moreover, it is NP-hard to even approximate it within a factor of $2^{ck}$, for some constant $c>0$ \cite{CM2}. This is interesting in the light of our results because we show that even though finding the row-submatrix of maximum volume is NP-hard, we can still sample them with probabilities proportional to their volumes in polynomial time.


\subsection{Our results}
Our main result is a polynomial time algorithm for exact volume sampling. In Section \ref{sec:algo-outline}, we give an outline of our Algorithm \ref{alg:outline}, followed by two possible subroutines given by Algorithms  \ref{alg:omegaSub} and \ref{alg:svdSub} that could be plugged into it.

\begin{theorem}[polynomial-time volume sampling]\label{thm:polytimeVS}
The randomized algorithm given by the combination of the algorithm outlined in Algorithm \ref{alg:outline} with Algorithm \ref{alg:omegaSub} as its subroutine, when given a matrix $A \in \reals^{m \times n}$ and an integer $1 \leq k \leq \rank(A)$, outputs a random $k$-subset of the rows of $A$ according to volume sampling, using $O(kmn^\omega \log n)$ arithmetic operations.
\end{theorem}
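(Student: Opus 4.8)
The plan is to select the $k$ rows one index at a time, maintaining after each selection the residual matrix whose rows are all rows of $A$ projected orthogonally to the span of the indices chosen so far. The whole scheme rests on two facts. Write $e_{j}(M)$ for the $j$-th elementary symmetric polynomial of the squared singular values of $M$, equivalently (up to sign) a coefficient of the characteristic polynomial of $M M^{T}$ (or of $M^{T} M$). Then a standard Cauchy--Binet computation gives $\sum_{\card{S} = j} \deter{M_{S} M_{S}^{T}} = e_{j}(M)$, while the base-times-height factorization of volume gives, for any fixed row index $i$,
\[ \sum_{S \ni i,\, \card{S} = k} \deter{A_{S} A_{S}^{T}} = \snorms{a_{i}} \cdot e_{k-1}\!\left(\tilde{A}^{(i)}\right), \]
where $\tilde{A}^{(i)}$ is $A$ with every row projected orthogonally to $a_{i}$ (so $\tilde a_{i} = 0$).

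Given these identities, the algorithm picks the first index $i_{1}$ with probability proportional to $\snorms{a_{i}}\, e_{k-1}(\tilde{A}^{(i)})$, then replaces the working matrix by $\tilde{A}^{(i_{1})}$ and recurses to pick $k-1$ more indices. To verify correctness I would multiply the step probabilities along a fixed ordered tuple $(i_{1}, \dotsc, i_{k})$. Summing the displayed identity over $i$ counts each size-$j$ set $j$ times, so $\sum_{i} \snorms{b_{i}}\, e_{j-1}(\tilde{B}^{(i)}) = j\, e_{j}(B)$ for the current residual matrix $B$, and hence at a step at which $j$ indices remain to be chosen the conditional probability of index $i$ is $\snorms{b_{i}}\, e_{j-1}(\tilde{B}^{(i)}) / (j\, e_{j}(B))$. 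Across the $k$ steps the elementary-symmetric factors telescope, the normalizers contribute a factor $k!$, and the successive squared residual norms multiply (again by base-times-height) to the squared volume $\deter{A_{S} A_{S}^{T}}$. Thus each ordered tuple has probability $\deter{A_{S}A_{S}^{T}} / (k!\, e_{k}(A))$, and summing over the $k!$ orderings of a set $S$ gives exactly $\deter{A_{S}A_{S}^{T}} / e_{k}(A)$, the volume-sampling probability. Since each step normalizes its own finite list of weights, only the relative weights $\snorms{b_{i}}\, e_{j-1}(\tilde{B}^{(i)})$ are ever needed.

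For the running time, the cost of a step is dominated by evaluating, for each of the at most $m$ candidate indices $i$, the single coefficient $e_{j-1}(\tilde{B}^{(i)})$, i.e.\ one coefficient of the characteristic polynomial of the projected Gram matrix $(\tilde{B}^{(i)})^{T} \tilde{B}^{(i)} \in \reals^{n \times n}$. Each $\tilde{B}^{(i)}$ is a rank-one downdate of $B$, so its Gram matrix is a low-rank correction of $B^{T} B$ and can be assembled in $O(n^{2})$ operations if $B^{T} B$ is kept up to date; the characteristic polynomial of an $n \times n$ matrix can be computed in $O(n^{\omega} \log n)$ arithmetic operations, which I would cite as a known result and which is the job of Algorithm \ref{alg:omegaSub}. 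Multiplying $O(n^{\omega} \log n)$ by the $m$ candidates and the $k$ steps yields the claimed $O(k m n^{\omega} \log n)$.

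The main obstacle is exactly this efficient, exact evaluation of the marginals: collapsing the exponentially large sum $\sum_{S \ni i} \deter{A_{S} A_{S}^{T}}$ into one cheaply computable number. The volume factorization reduces it to a single characteristic-polynomial coefficient, after which the entire running time hinges on computing that polynomial in $O(n^{\omega} \log n)$ operations and on realizing the per-candidate projections as rank-one updates rather than recomputing Gram matrices from scratch. The points that still need care are degenerate rows (those with $a_{i} = 0$, which carry zero weight and may be dropped), the hypothesis $k \le \rank(A)$ ensuring $e_{k}(A) > 0$ so the distribution is well defined, and the bookkeeping that applies each projection consistently to all surviving rows.
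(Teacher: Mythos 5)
Your proposal is correct and follows essentially the same route as the paper: the same sequential sampling of one index at a time, with step weights $\norm{b_i}^2$ times a characteristic-polynomial coefficient justified by the same two facts (the Cauchy--Binet identity of Lemma \ref{lemma:char-poly} and the base-times-height determinant factorization of Lemma \ref{lemma:det-division}), and the same running-time analysis via rank-one updates of $B^TB$ and $O(n^{\omega}\log n)$ characteristic polynomials. Your direct telescoping of the step probabilities is just a repackaging of the paper's argument, which identifies each step probability with a conditional marginal of the extended distribution on $k$-tuples (Lemma \ref{lemma:marginals}) and then multiplies them by the chain rule (Proposition \ref{prop:outline-correct}).
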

The basic idea of the algorithm is as follows: instead of picking a $k$-subset, pick an ordered $k$-tuple of rows according to volume sampling (i.e., volume sampling suitably extended to all $k$-tuples such that for any fixed $k$-subset, all its $k!$ permutations are all equally likely). We observe that the marginal distribution of the first coordinate of such a random tuple can be expressed in terms of coefficients of the characteristic polynomials of $AA^{T}$ and $B_{i} B_{i}^{T}$, where $B_{i} \in \reals^{m \times n}$ is the matrix obtained by projecting each row of $A$ orthogonal to the $i$-th row $a_{i}$. Using this interpretation, it is easy to sample the first index of the $k$-tuple with the right marginal probability. Now we project the rows of $A$ orthogonal to the chosen row and repeat to pick the next row, until we have picked $k$ of them.

The algorithm just described informally, if implemented as stated, would have a polynomial dependence in $m$, $n$ and $k$, for some low-degree polynomial. We can do better and get a linear dependence in $m$ by working with $A^{T}A$ in place of $AA^{T}$ and computing the projected matrices using rank-$1$ updates (Theorem \ref{thm:polytimeVS}), while still having a polynomial time guarantee and sampling exactly. It would be even faster to perform rank-1 updates to the characteristic polynomial itself, but that requires the computation of the inverse of a polynomial matrix (Proposition \ref{prop:svdVS}), and it is not clear to us at this time that there is a fast enough exact algorithm that works for arbitrary matrices. Jeannerod and Villard \cite{JV} give an algorithm to invert a \emph{generic} $n$-by-$n$ matrix with entries of degree $d$, with $n$ a power of two, in time $O(n^3 d)$. This would lead to the computation of all marginal probabilities for one row in time $O(n^3 + mn^2)$ (a variation of Algorithm \ref{alg:svdSub} and its analysis). 

Instead, if we are willing to be more practical, while sacrificing our guarantees, then we can perform rank-$1$ updates to the characteristic polynomial by using the singular value decomposition (SVD). In \cite{GVL}, an algorithm with cost $O(\min\{mn^{2}, m^{2}n\})$ arithmetic operations is given for the singular value decomposition but the SVD cannot be computed \emph{exactly} and we do not know how its error propagates in our algorithm which uses many such computations. If the SVD of an $m$-by-$n$ matrix can be computed in time $O(T_{svd})$, this leads to a nearly-exact algorithm for volume sampling in time $O(k T_{svd} + kmn^{2})$. See Proposition \ref{prop:svdVS} for details.

Volume sampling was originally defined in \cite{DRVW} to prove Theorem \ref{thm:vol-sampling}, in particular, to show that any matrix $A$ contains $k$ rows in whose span lie the rows of a rank-$k$ approximation to $A$ that is no worse than the best in the Frobenius norm. Efficient volume sampling leads to an efficient selection of $k$ rows that satisfy this guarantee, \emph{in expectation}. In Section \ref{sec:derand}, we use the method of conditional expectations to derandomize this selection. This gives an efficient deterministic algorithm (Algorithm \ref{alg:derand}) for row-subset selection with the following guarantee in the Frobenius norm. This guarantee immediately implies a guarantee in the spectral norm, as follows:

\begin{theorem}[deterministic row subset selection] \label{thm:det-sub-select}
Deterministic Algorithm \ref{alg:derand}, when given a matrix $A \in \reals^{m \times n}$ and an integer $1 \leq k \leq \rank(A)$, outputs a $k$-subset $S$ of the rows of $A$, using $O(kmn^{\omega} \log n)$ arithmetic operations, such that
\begin{align*}
\frob{A - \pi_{S}(A)} & \leq \sqrt{k+1} \frob{A - A_{k}} \\
\norm{A - \pi_{S}(A)}_{2} & \leq \sqrt{(k+1)(n-k)} \norm{A - A_{k}}_{2}.
\end{align*}
\end{theorem}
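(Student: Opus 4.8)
The plan is to obtain the Frobenius guarantee by derandomizing Theorem~\ref{thm:vol-sampling} with the method of conditional expectations, run over the same sequential row-selection process that underlies the volume-sampling algorithm of Theorem~\ref{thm:polytimeVS}, and then to deduce the spectral guarantee from the Frobenius one by a norm comparison. Recall that volume sampling picks an ordered $k$-tuple one coordinate at a time: having committed to a prefix $T$ of already-chosen row indices, one projects every row of $A$ orthogonal to $\linspan{a_i \suchthat i \in T}$ to obtain a residual matrix $B^{(T)}$, and then samples the next index from the first-coordinate marginal of volume sampling on $B^{(T)}$. The derandomization (Algorithm~\ref{alg:derand}) replaces ``sample the next index'' by ``choose the index that minimizes the conditional expectation of the final squared error.''

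For this to work I need a computable closed form for that conditional expectation. Writing $S = T \sqcup R$ for a completed selection, a standard orthogonality argument shows $\frob{A - \pi_S(A)}^2 = \frob{B^{(T)} - \pi_R(B^{(T)})}^2$: the span of the chosen rows contains $\linspan{a_i \suchthat i \in T}$, and the component of that span orthogonal to it is exactly $\linspan{b_i \suchthat i \in R}$, where $b_i$ is the $i$-th row of $B^{(T)}$. Hence, conditioned on a prefix $T$ of size $j$, the remaining choice is volume sampling of a $(k-j)$-subset of $B^{(T)}$, and the conditional expectation of the final squared error is
\[
f(T) \;=\; \expec{\frob{B^{(T)} - \pi_R(B^{(T)})}^2} \;=\; (k-j+1)\,\frac{e_{k-j+1}\big(B^{(T)}\big)}{e_{k-j}\big(B^{(T)}\big)},
\]
where $e_t(M) = \sum_{\size{Q}=t} \det\!\big(M_Q M_Q^T\big)$ is the $t$-th elementary symmetric polynomial in the squared singular values of $M$; the second equality is the exact evaluation underlying Theorem~\ref{thm:vol-sampling}. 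These $e_t$ are, up to sign, the coefficients of the characteristic polynomial of $B^{(T)} (B^{(T)})^T$, the very quantities the sampling algorithm already manipulates.

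With $f$ in hand the argument is the usual telescoping. Sequentiality of volume sampling gives $f(T) = \sum_i p_i\, f(T \cup \{i\})$, a convex combination over the first-coordinate marginal probabilities $p_i$, so some index achieves $f(T \cup \{i\}) \le f(T)$ and the algorithm picks the minimizer. Starting from $f(\emptyset) = \expec{\frob{A - \pi_S(A)}^2} \le (k+1)\frob{A - A_k}^2$ by Theorem~\ref{thm:vol-sampling}, and noting that when $\size{S} = k$ the formula degenerates to $f(S) = e_1(B^{(S)}) = \frob{A - \pi_S(A)}^2$, the chain $\frob{A - \pi_S(A)}^2 = f(S) \le \dots \le f(\emptyset) \le (k+1)\frob{A - A_k}^2$ yields the Frobenius bound after taking square roots. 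Computing, at each of the $k$ rounds, the $m$ conditional expectations $f(T \cup \{i\})$ reuses exactly the characteristic-polynomial coefficients already computed for the marginals in Theorem~\ref{thm:polytimeVS}, so the running time is unchanged at $O(kmn^\omega \log n)$.

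The spectral bound then follows with no further use of the algorithm, from $\norm{A - \pi_S(A)}_2 \le \frob{A - \pi_S(A)}$ together with $\frob{A - A_k}^2 = \sum_{i>k}\sigma_i^2 \le (n-k)\,\sigma_{k+1}^2 = (n-k)\,\norm{A - A_k}_2^2$, giving $\norm{A - \pi_S(A)}_2^2 \le (k+1)(n-k)\,\norm{A - A_k}_2^2$. I expect the main obstacle to be the middle step: verifying that the conditional expectation admits the exact rational closed form above, rather than merely the inequality of Theorem~\ref{thm:vol-sampling}, and that it can be evaluated and compared across all candidate rows within the same cost as the marginals, since the entire derandomization hinges on computing these quantities exactly.
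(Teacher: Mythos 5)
Your proposal is correct and takes essentially the same route as the paper: your closed form for the conditional expectation is exactly the paper's Lemma~\ref{lemma:conditional} (with the elementary symmetric functions $e_t$ rewritten as characteristic-polynomial coefficients via Lemma~\ref{lemma:char-poly}), and the telescoping via the law of total expectation, the rank-based comparison $\frob{A-A_k}^2 \leq (n-k)\norm{A-A_k}_2^2$ for the spectral bound, and the reuse of the characteristic polynomials of $C_i^T C_i$ from Algorithm~\ref{alg:omegaSub} for the $O(kmn^{\omega}\log n)$ running time all coincide with the paper's proof. The one step you flag as the main obstacle---that the conditional expectation is an exact ratio rather than merely the inequality of Theorem~\ref{thm:vol-sampling}---is precisely what the paper's Lemma~\ref{lemma:conditional} establishes by direct computation from Lemmas~\ref{lemma:det-division} and~\ref{lemma:char-poly}, whereas you obtain the same formula by restarting volume sampling on the residual matrix and invoking the exact identity inside \cite{DRVW}'s proof of Theorem~\ref{thm:vol-sampling}; both derivations rest on the same determinant-factorization lemma, so this is a repackaging rather than a genuinely different argument.
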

This improves the $O(k \sqrt{\log k})$-approximation of Boutsidis, Drineas and Mahoney \cite{BDM} for the Frobenius norm case and matches the lower bound shown in Theorem \ref{thm:DRVW-lower} due to \cite{DRVW}.

The superlinear dependence on $n$ might be too slow for some applications, while it might be acceptable to perform volume sampling or row/column-subset selection approximately. Our volume sampling algorithm (Algorithm \ref{alg:outline}) can be made faster, while losing on the exactness, by using the idea of random projection that preserves volumes of subsets. Magen and Zouzias \cite{MZ} have the following generalization of the Johnson-Lindenstrauss lemma: for $m$ points in $\reals^{n}$ there exists a random projection of them into $\reals^{d}$, where $d = O\left(k^{2} \log m / \eps^{2}\right)$, that preserves the volumes of simplices formed by subsets of $k$ or fewer points within $1\pm\eps$. Therefore, we get a $(1\pm\epsilon)$-approximate volume sampling algorithm that requires $O(mnd)$-time to do the random projection (by matrix multiplication) and then $O(mnd^{\omega} \log d)$ time for volume sampling on the new $m$-by-$d$ matrix (according to Theorem \ref{thm:polytimeVS}).

\begin{theorem}[fast volume sampling] \label{thm:fast-VS}
Using random projection for dimensionality reduction, the polynomial time algorithm for volume sampling mentioned in Theorem \ref{thm:polytimeVS} (i.e., Algorithm \ref{alg:outline} with Algorithm \ref{alg:omegaSub} as its subroutine), gives $(1+\eps)$-approximate volume sampling, using
\[
O\left(mn \log m \cdot \frac{k^{2}}{\eps^{2}} + m \log^{\omega} m \cdot \frac{k^{2\omega + 1}}{\eps^{2\omega}} \log(k \eps^{-1} \log m)\right).
\]
arithmetic operations.
\end{theorem}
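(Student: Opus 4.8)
The plan is to prove Theorem~\ref{thm:fast-VS} by \emph{composing} two black boxes: the volume-preserving random projection of Magen and Zouzias \cite{MZ} and the exact volume-sampling algorithm of Theorem~\ref{thm:polytimeVS}. Concretely, I would run the following two-step procedure. First, regard the rows $a_1, \dotsc, a_m$ of $A$ as $m$ points in $\reals^n$, adjoin the origin, and apply the random projection $\Phi \colon \reals^n \to \reals^d$ of \cite{MZ} with target dimension $d = O(k^2 \log m / \eps^2)$, obtaining the matrix $\tilde{A} = A \Phi^T \in \reals^{m \times d}$. Second, run the exact algorithm of Theorem~\ref{thm:polytimeVS} (Algorithm~\ref{alg:outline} with Algorithm~\ref{alg:omegaSub}) on $\tilde{A}$ and return the $k$-subset $S$ it produces. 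The output distribution is, by construction, \emph{exact} volume sampling on $\tilde{A}$; the content of the theorem is that this is a $(1+\eps)$-approximation to volume sampling on $A$, within the claimed running time.

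For correctness I would argue as follows. By Definition~\ref{def:vol-sampling}, volume sampling draws $S$ with probability proportional to $\deter{A_S A_S^T} = \left(k! \, \vol{\conv{\{\bar{0}\} \cup \{a_i \suchthat i \in S\}}}\right)^2$, the squared $k$-dimensional volume of the simplex on the chosen rows together with the origin, a set of $k+1$ points. Since the origin is one of the adjoined points and $\Phi$ is linear, it suffices to invoke the guarantee of \cite{MZ} with parameter $k+1$ in place of $k$ (this only changes constants, so $d = O(k^2 \log m / \eps^2)$ still holds). With high probability over the choice of $\Phi$, \emph{every} such simplex on at most $k+1$ of the points has its volume preserved to within a factor $1 \pm \eps$ \emph{simultaneously}; the union bound over the $\binom{m}{k}$ subsets is exactly what the $\log m$ factor in $d$ pays for. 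Condition on this event. Then each unnormalized weight $\deter{\tilde{A}_S \tilde{A}_S^T}$ is within $(1\pm\eps)^2$ of $\deter{A_S A_S^T}$, and the same holds for the normalizing sum over all $k$-subsets. Consequently, writing $p$ and $\hat{p}$ for volume sampling on $A$ and on $\tilde{A}$ respectively,
\[
\frac{\hat{p}(S)}{p(S)} \in \left[\left(\frac{1-\eps}{1+\eps}\right)^2, \left(\frac{1+\eps}{1-\eps}\right)^2\right] = 1 \pm O(\eps)
\]
for every $S$, so after rescaling $\eps$ by a constant factor $\hat{p}$ is a $(1+\eps)$-approximate volume sampling distribution for $A$.

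For the running time I would simply account for the two steps and substitute $d = O(k^2 \log m / \eps^2)$. Forming $\tilde{A} = A\Phi^T$ by matrix multiplication costs $O(mnd)$ arithmetic operations, which is $O(mn \log m \cdot k^2/\eps^2)$, the first term. Running the exact algorithm of Theorem~\ref{thm:polytimeVS} on the $m$-by-$d$ matrix $\tilde{A}$ costs $O(kmd^\omega \log d)$; substituting $d$ gives $O\!\left(k m (k^2 \log m / \eps^2)^\omega \log d\right) = O\!\left(m \log^\omega m \cdot k^{2\omega+1}/\eps^{2\omega} \cdot \log d\right)$, and since $\log d = O(\log(k \eps^{-1} \log m))$ this is exactly the second term. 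Adding the two yields the stated bound.

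I expect the running-time computation to be entirely routine; the only step requiring care is the correctness reduction. The delicate points there are (i) applying the Magen--Zouzias guarantee to simplices that pass through the origin, which I handle by adjoining $\bar{0}$ to the point set and bumping the parameter from $k$ to $k+1$, and (ii) checking that a per-simplex multiplicative distortion of $1 \pm \eps$ in volume, after squaring and normalizing, yields only an $O(\eps)$ multiplicative distortion of the entire probability distribution rather than an additive or unbounded error. Both become straightforward once one conditions on the high-probability event that the projection preserves all relevant volumes simultaneously.
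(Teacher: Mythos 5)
Your proposal is correct and follows essentially the same route as the paper: project $A$ to $d = O(k^2\log m/\eps^2)$ dimensions via Magen--Zouzias, run the exact $O(kmd^\omega\log d)$ algorithm of Theorem~\ref{thm:polytimeVS} on $\tilde{A}$, and account for the two costs. The only difference is that you re-derive the determinant-preservation guarantee from the original volume statement of \cite{MZ} (handling the origin by adjoining it and using parameter $k+1$, and checking that the weight and normalizer distortions compose to a $1+O(\eps)$ distortion of the distribution), whereas the paper cites its restatement (Theorem~\ref{thm:random-projection}) which already asserts $\deter{A_S A_S^T} \le \deter{\tilde{A}_S\tilde{A}_S^T} \le (1+\eps)\deter{A_S A_S^T}$ for all $\card{S}\le k$; your version makes explicit the steps the paper leaves implicit.
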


Finally, we show a lower bound for row/column-subset selection in the spectral norm that almost matches our upper bound in terms of the dependence on $n$.
\begin{theorem}[lower bound] \label{thm:lower-bound}
There exists a matrix $A \in \reals^{n \times (n+1)}$ such that
\[
\norm{A - \pi_{\{i\}}(A)}_{2} = \Omega(\sqrt{n}) \norm{A - A_{1}}_{2}, \quad \text{for all $1 \leq i \leq n$},
\]
where $\pi_{\{i\}}(A) \in \reals^{n \times (n+1)}$ is the matrix obtained by projecting each row of $A$ onto the span of its $i$-th row $a_{i}$.
\end{theorem}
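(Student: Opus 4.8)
The plan is to construct an explicit matrix in which all rows share one strong common direction while each row carries its own small ``private'' direction, so that projecting onto any single row's span throws away the private directions of all the remaining rows simultaneously. Concretely, I would take $A \in \reals^{n \times (n+1)}$ with $i$-th row $a_i = e_i + e_{n+1}$, where $e_1, \dotsc, e_{n+1}$ denotes the standard basis of $\reals^{n+1}$; that is, $A = [\, I_n \mid \mathbf{1} \,]$. Here $e_{n+1}$ is the dominant direction common to every row, which is why the best rank-$1$ approximation is very accurate, whereas each $a_i$ deviates from $e_{n+1}$ only into its own coordinate $e_i$.

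First I would pin down $\norm{A - A_1}_2$. Since $A A^T = I_n + \mathbf{1}\mathbf{1}^T$, its eigenvalues are $n+1$ with eigenvector $\mathbf{1}$ and $1$ with multiplicity $n-1$, so the singular values of $A$ are $\sigma_1 = \sqrt{n+1}$ and $\sigma_2 = \dotsb = \sigma_n = 1$. Hence $\norm{A - A_1}_2 = \sigma_2 = 1$.

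Next I would fix an index $i$ and compute the residual $R = A - \pi_{\{i\}}(A)$ explicitly. Projecting row $a_j$ onto $\linspan{a_i}$ and using $\inner{a_j}{a_i} = 1$ for $j \neq i$ together with $\inner{a_i}{a_i} = 2$, I get that row $i$ of $R$ vanishes while, for $j \neq i$, row $j$ of $R$ equals $a_j - \frac{1}{2} a_i = e_j - \frac{1}{2} e_i + \frac{1}{2} e_{n+1}$. The point of the construction is now visible: although every row of $R$ has norm $\Theta(1)$, all $n-1$ surviving rows contain the same $-\frac{1}{2} e_i$ term, i.e.\ their residuals are coherently aligned along the private direction $e_i$ of the discarded row.

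The final and most delicate step is to convert this coherence into a spectral-norm lower bound of the right order; bounding by a single row norm only yields $\Omega(1)$, so I would instead test $R$ against the unit vector $e_i$. Applying $R$ to $e_i$ extracts exactly the aligned $-\frac{1}{2}$ entries, giving $\norm{R e_i} = \frac{1}{2}\sqrt{n-1}$ and therefore $\norm{R}_2 \geq \frac{1}{2}\sqrt{n-1}$. Combining with $\norm{A - A_1}_2 = 1$ yields $\norm{A - \pi_{\{i\}}(A)}_2 \geq \frac{1}{2}\sqrt{n-1} = \Omega(\sqrt{n}) \, \norm{A - A_1}_2$. Since the construction is symmetric under permuting the first $n$ coordinates and the rows simultaneously, the same bound holds for every $i$, which is exactly the claim. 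I expect the main obstacle to be purely in the design of the construction --- choosing the private/common split so that discarding one private direction leaves an $\ell_2$-coherent residual of size $\sqrt{n}$ while keeping $\sigma_2$ at $O(1)$; once the matrix is fixed, the verification above is a short calculation.
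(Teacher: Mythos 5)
Your proposal is correct and takes essentially the same approach as the paper: the paper's matrix has rows $e_1 + \eps\, e_{i+1}$ with $0<\eps<1$, which is exactly your construction up to a column permutation and the scaling of the ``private'' entries, and it likewise computes $\sigma_2(A)$ from $AA^T = \mathbf{1}\mathbf{1}^T + \eps^2 I$ and then lower-bounds the spectral norm of the residual after projecting onto one row. The only cosmetic difference is that the paper evaluates $\norm{A-\pi_{\{1\}}(A)}_2$ exactly via the top eigenvector $(0,1,\dotsc,1)$ of the residual Gram matrix, while you use the test vector $e_i$, a slightly more economical route to the same $\Omega(\sqrt{n})$ bound.
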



\section{Preliminaries and notation}\label{sec:prelim}
For $m \in \NN$, let $[m]$ denote the set $\{1, \dotsc, m\}$. For any matrix $A \in \reals^{m \times n}$, we denote its rows by $a_{1}, a_{2}, \dots, a_{m} \in \reals^{n}$. For $S \subseteq [m]$, let $A_{S}$ be the row-submatrix of $A$ given by the rows with indices in $S$. By $\linspan{S}$ we denote the linear span of $\{a_{i} \suchthat i \in S\}$ and let $\pi_{S}(A) \in \reals^{m \times n}$ be the matrix obtained by projecting each row of $A$ onto $\linspan{S}$. Hence, $A - \pi_{S}(A) \in \reals^{m \times n}$ is the matrix obtained by projecting each row of $A$ orthogonal to $\linspan{S}$.

Throughout the paper we assume $m \geq n$. This assumption is not needed most of the time, but justifies sometimes working with $A^T A$ instead of $A A^T$ and, more generally, some choices in the design of our algorithms. It is also partially justified by our use of a random projection as a preprocessing step that makes $n$ small.

The singular values of $A \in \reals^{m \times n}$ are defined as the positive square-roots of the eigenvalues of $AA^{T} \in \reals^{m \times m}$ (or $A^{T}A \in \reals^{n \times n}$, up to some extra singular values equal to zero), and we denote them by $\sigma_{1} \geq \sigma_{2} \geq \dotsb \geq \sigma_{m} \geq 0$. Well-known identities of the singular values like
\[
\tr{AA^{T}} = \sum_{i=1}^{m} \sigma_{i}^{2} \quad \text{and} \quad \deter{AA^{T}} = \prod_{i=1}^{m} \sigma_{i}^{2}
\]
can be generalized into the following lemma.


\begin{lemma} (Proposition 3.2 in \cite{DRVW}) \label{lemma:char-poly}
For any $A \in \reals^{m \times n}$,
\[
\sum_{S \subseteq [m] \suchthat \size{S}=k} \deter{A_{S} A_{S}^{T}} = \sum_{i_{1} < \dotsb < i_{k}} \sigma_{i_{1}}^{2} \dotsb \sigma_{i_{k}}^{2} = \abs{c_{m-k}(AA^{T})},
\]
where $\sigma_{1}, \dotsc, \sigma_{m}$ are the singular values of $A$, i.e., eigenvalues of $AA^{T}$, and
\[
\deter{xI - AA^{T}} = x^{m} + c_{m-1}(AA^{T}) x^{m-1} + \dotsc + c_{0}(AA^{T}) = \prod_{i=1}^{m} (x - \sigma_{i}^{2}),
\]
is the characteristic polynomial of $AA^{T}$. Using $\deter{xI - AA^{T}} = x^{m-n} \deter{xI - A^{T}A}$, we can alternatively use $c_{m-k} (AA^{T}) = c_{n-k}(A^{T}A)$ in the above formula, for $k \leq n$.
\end{lemma}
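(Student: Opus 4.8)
The plan is to recognize all three quantities in the statement as readings of the coefficients of the characteristic polynomial of $AA^{T}$. The starting observation is that for $S \subseteq [m]$ with $\size{S} = k$, the matrix $A_{S} A_{S}^{T}$ is exactly the principal $k \times k$ submatrix of $M := AA^{T}$ indexed by the rows and columns in $S$; write it $M_{S,S}$. Thus $\sum_{\size{S}=k} \deter{A_{S}A_{S}^{T}} = \sum_{\size{S}=k} \deter{M_{S,S}}$ is the sum of all principal $k \times k$ minors of $M$, and the whole lemma reduces to two classical facts about a symmetric matrix $M$ with eigenvalues $\sigma_{1}^{2}, \dotsc, \sigma_{m}^{2}$: that this sum of principal minors equals the elementary symmetric polynomial $e_{k}(\sigma_{1}^{2}, \dotsc, \sigma_{m}^{2})$, and that it also equals $\abs{c_{m-k}(M)}$.

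For the middle equality I would expand the factored form of the characteristic polynomial. Since $M = AA^{T}$ is symmetric positive semidefinite with eigenvalues $\sigma_{i}^{2} \geq 0$, we have $\deter{xI - M} = \prod_{i=1}^{m}(x - \sigma_{i}^{2})$, and Vieta's formulas identify the coefficient of $x^{m-k}$ as $(-1)^{k} e_{k}(\sigma_{1}^{2}, \dotsc, \sigma_{m}^{2})$; comparing with the normalization $\deter{xI - M} = x^{m} + c_{m-1}(M) x^{m-1} + \dotsb + c_{0}(M)$ gives $c_{m-k}(M) = (-1)^{k} e_{k}$, and since $e_{k} \geq 0$ this yields $\abs{c_{m-k}(M)} = e_{k} = \sum_{i_{1} < \dotsb < i_{k}} \sigma_{i_{1}}^{2} \dotsb \sigma_{i_{k}}^{2}$. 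For the first equality I would obtain the same coefficient a second way, by a multilinear (cofactor) expansion of $\deter{xI - M}$: viewing the $j$-th column of $xI-M$ as $x$ times the $j$-th standard basis vector minus the $j$-th column of $M$ and using multilinearity, a term in which the columns indexed by $T$ contribute their $-M$ part and the rest contribute the standard-basis part evaluates to $x^{m-\size{T}} (-1)^{\size{T}} \deter{M_{T,T}}$, so the coefficient of $x^{m-k}$ is also $(-1)^{k} \sum_{\size{T}=k} \deter{M_{T,T}}$. Matching the two readings of this coefficient proves $\sum_{\size{S}=k}\deter{M_{S,S}} = e_{k}$, which is the first equality.

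The only place requiring care is this multilinear expansion: one must check that after choosing the standard-basis columns, cofactor expansion along those columns really collapses the surviving block to the principal minor $M_{T,T}$ with the stated sign $(-1)^{\size{T}}$. An equivalent and perhaps cleaner route avoiding signs is to write $M = U D U^{T}$ with $U$ orthogonal and $D = \operatorname{diag}(\sigma_{i}^{2})$, so that $\deter{M_{S,S}} = \deter{U_{S} D U_{S}^{T}}$ with $U_{S}$ the row-submatrix of $U$, and then apply the Cauchy-Binet formula twice together with $U^{T}U = I$. Finally, the identity $c_{m-k}(AA^{T}) = c_{n-k}(A^{T}A)$ for $k \leq n$ is immediate from $\deter{xI - AA^{T}} = x^{m-n}\deter{xI - A^{T}A}$: substituting the normalized expansions of both characteristic polynomials and comparing the coefficient of $x^{m-k}$ on each side (the factor $x^{m-n}$ shifts the index of the $A^{T}A$ coefficients by $m-n$) forces $c_{m-k}(AA^{T}) = c_{n-k}(A^{T}A)$ whenever $0 \leq n-k \leq n$.
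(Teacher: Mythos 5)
Your proof is correct. There is, however, no in-paper argument to compare it against: the paper states this lemma with a citation (Proposition 3.2 of \cite{DRVW}) and reproduces no proof, so your derivation is a self-contained filling-in rather than an alternative to something the paper does. What you give is the standard route, and it is sound at every step: the observation $A_{S}A_{S}^{T} = (AA^{T})_{S,S}$ reduces the lemma to two classical facts about a symmetric positive semidefinite $M$ with eigenvalues $\sigma_{i}^{2}$, namely that the coefficient of $x^{m-k}$ in $\deter{xI - M}$ equals $(-1)^{k}$ times the sum of the principal $k \times k$ minors (your multilinear column expansion, with the sign $(-1)^{\size{T}}$ handled correctly) and also equals $(-1)^{k} e_{k}(\sigma_{1}^{2}, \dotsc, \sigma_{m}^{2})$ by Vieta; positive semidefiniteness of $AA^{T}$ then lets you drop the signs and conclude $\abs{c_{m-k}(AA^{T})} = e_{k}$, which is exactly where the absolute value in the statement comes from. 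The Cauchy--Binet variant you sketch (write $M = U D U^{T}$, expand $\deter{U_{S} D U_{S}^{T}}$ over column subsets $T$, and use $\sum_{\size{S}=k} \deter{U_{S,T}}^{2} = \deter{I_{k}} = 1$ by orthonormality of the columns of $U$) is equally valid and avoids the sign bookkeeping entirely. Finally, deriving $c_{m-k}(AA^{T}) = c_{n-k}(A^{T}A)$ by matching the coefficient of $x^{m-k}$ across $\deter{xI - AA^{T}} = x^{m-n} \deter{xI - A^{T}A}$ is exactly right; the only conceivable quibble is that a fully self-contained treatment would also prove that displayed identity (e.g.\ via the SVD, or from the general fact $\deter{xI_{m} - AB} = x^{m-n} \deter{xI_{n} - BA}$), but the lemma statement explicitly supplies it as given, so you are entitled to use it.
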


Let $\omega$ be the exponent of the arithmetic complexity of matrix multiplication. We use that there is an algorithm for computing the characteristic polynomial of an $n$-by-$n$ matrix using $O(n^\omega \log n)$ arithmetic operations \cite[Section 16.6]{ACT}.

Here is another lemma that we will need about dividing determinants into products of two determinants.
\begin{lemma} \label{lemma:det-division}
Let $A \in \reals^{m \times n}$, $S, T \subseteq [m]$, $S \cap T = \emptyset$ and $B = A - \pi_{S}(A)$. Then
\[
\deter{A_{S \cup T} A_{S \cup T}^{T}} = \deter{A_{S} A_{S}^{T}} \deter{B_{T} B_{T}^{T}}.
\]
\end{lemma}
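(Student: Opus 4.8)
The plan is to prove this as the familiar ``base times height'' factorization of (squared) volume, carried out at the level of Gram determinants. The starting observation is that $\deter{A_{S\cup T}A_{S\cup T}^T}$ is unchanged if we subtract from each row indexed by $T$ its orthogonal projection onto $\linspan{S}$: this is an elementary row operation, and such operations leave the Gram determinant invariant. After performing it, the rows indexed by $S$ still lie in $\linspan{S}$ while the rows indexed by $T$ become orthogonal to $\linspan{S}$, so the resulting Gram matrix is block diagonal and its determinant factors into exactly the two claimed pieces.

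Concretely, order the rows of $M := A_{S\cup T}$ so that those indexed by $S$ come first, followed by those indexed by $T$. For each $i \in T$ the vector $\pi_S(a_i)$ lies in $\linspan{S}$, so we may write $\pi_S(a_i) = \sum_{j \in S} c_{ij} a_j$ for some coefficients $c_{ij}$. Collecting these into a $\size{T} \times \size{S}$ matrix $C$ and setting
\[
E = \begin{pmatrix} I_{\size{S}} & 0 \\ -C & I_{\size{T}} \end{pmatrix},
\]
we have $\det E = 1$, and $N := EM$ is the matrix whose rows indexed by $S$ are the original $a_i$ ($i \in S$) and whose rows indexed by $T$ are $b_i = a_i - \pi_S(a_i)$ ($i \in T$), i.e.\ exactly the rows of $B$ indexed by $T$. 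Since $N N^T = E M M^T E^T$ and $\det E = \det E^T = 1$, we get $\deter{M M^T} = \deter{N N^T}$.

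It then remains to observe that $N N^T$ is block diagonal. Its $(S,S)$ block is $A_S A_S^T$ and its $(T,T)$ block is $B_T B_T^T$, while every off-diagonal entry has the form $\inner{a_i}{b_j}$ with $i \in S$ and $j \in T$; this vanishes because $a_i \in \linspan{S}$ whereas $b_j = a_j - \pi_S(a_j)$ is orthogonal to $\linspan{S}$ by the definition of the orthogonal projection. Hence $\deter{N N^T} = \deter{A_S A_S^T}\,\deter{B_T B_T^T}$, which completes the proof. The proof is essentially routine; the one point requiring care is that when the rows indexed by $S$ are linearly dependent the representation $\pi_S(a_i) = \sum_{j} c_{ij} a_j$ is not unique, and I expect this bookkeeping, rather than any genuine difficulty, to be the main thing to get right. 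It is harmless, since any valid choice of $C$ yields the same projected vectors $b_i$, and the argument uses only that $b_i \perp \linspan{S}$.
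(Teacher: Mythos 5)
Your proof is correct, and it takes a genuinely different (and in one respect cleaner) route than the paper's. The paper works directly on the Gram matrix of $A_{S\cup T}$: it splits into two cases according to whether $A_S A_S^T$ is singular (in the singular case both sides vanish), and in the invertible case performs block Gaussian elimination on the Gram matrix to expose the Schur complement $A_T A_T^T - A_T A_S^T (A_S A_S^T)^{-1} A_S A_T^T$, which it then identifies with $B_T B_T^T$ via the explicit projection formula $\pi_L(K) = K L^T (L L^T)^{-1} L$. You instead perform the elimination on the rows of $A_{S\cup T}$ itself, multiplying by a unipotent matrix that replaces each row $a_i$, $i \in T$, by $b_i = a_i - \pi_S(a_i)$, and then conclude by orthogonality ($b_j \perp \linspan{S} \ni a_i$) that the resulting Gram matrix is block diagonal. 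The two eliminations coincide when $A_S A_S^T$ is invertible — your coefficient matrix is then $A_T A_S^T (A_S A_S^T)^{-1}$ — but your version buys a real simplification: you only need the \emph{existence} of coefficients expressing $\pi_S(a_i)$ as a combination of $\{a_j \suchthat j \in S\}$, which holds because $\pi_S(a_i) \in \linspan{S}$, so no inverse appears and the singular case needs no separate treatment; the paper's case split is absorbed into a single uniform argument. Your closing remark about the non-uniqueness of the $c_{ij}$ identifies exactly the right subtlety, and your resolution is correct: any valid choice yields the same rows $b_i$, which is all the orthogonality argument uses.
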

\begin{proof}
Without loss of generality, we can reduce ourselves to the case where $S \cup T$ is all rows of the given matrix: Let $C= A_{S \cup T} \in \RR^{\card{S \cup T} \times n}$,  $D = C -\pi_{S}(C)$. We have $D = B_{S \cup T}$. Then what we want to prove can be rewritten as:
\[
\det (C C^T) = \det(C_{S} C_{S}^T) \det(D_T D_T^T).
\]
To show this, we consider two cases. If $C_S C_S^T$ is singular, then both sides of the equality are zero. If $C_S C_S^T$ is invertible, then we can perform block Gaussian elimination and write
\[
\rowmatrix{E}{F}{G}{H} \rowmatrix{I}{-E^{-1} F}{0}{I}
= \rowmatrix{E}{0}{G}{D-G E^{-1}F},
\]
applied to $\rowmatrix{E}{F}{G}{H}= C$. Writing the determinants of the block-triangular matrices gives
\[
\det (C C^T) = \det (C_S C_S^T) \det (C_T C_T^T - C_T C_S^T(C_S C_S^T)^{-1} C_S C_T^T).
\]
Now, the projection of the rows of a matrix $K$ onto the row-space of a matrix $L$ can be written as
\[
\pi_{L}(K) = K L^T (L L^T)^{-1} L,
\]
so that $D_T = C_T - C_T C_S^T (C_S C_S^T)^{-1} C_S$,  and
\[
D_T D_T^T = C_T C_T^T - C_T C_S^T(C_S C_S^T)^{-1} C_S C_T^T.
\]
This completes the proof.
\end{proof}

Finally, a well-known lemma about how the determinant of a matrix changes under a rank-$1$ update.
\begin{lemma}[matrix determinant lemma] \label{lemma:matrix-det}
For any invertible $M \in \reals^{m \times m}$ and $u, v \in \reals^{m}$,
\[
\deter{M + uv^{T}} = (1 + v^{T}M^{-1}u) \deter{M}.
\]
\end{lemma}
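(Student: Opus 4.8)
This is the standard matrix determinant lemma, and the cleanest route is to reduce it to a determinant computation on a bordered $(m+1)$-by-$(m+1)$ matrix, in the same block-elimination spirit as Lemma~\ref{lemma:det-division}. The plan is to introduce
\[
N = \begin{pmatrix} M & u \\ -v^{T} & 1 \end{pmatrix} \in \reals^{(m+1) \times (m+1)}
\]
and to compute $\deter{N}$ in two different ways, via the two natural Schur complements.

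First I would eliminate using the invertible block $M$ in the top-left corner, which is legitimate precisely because $M$ is assumed invertible. Left-multiplying $N$ by $\begin{pmatrix} I & 0 \\ v^{T} M^{-1} & 1 \end{pmatrix}$ clears the bottom-left entry and replaces the bottom-right scalar $1$ by its Schur complement $1 + v^{T} M^{-1} u$, yielding a block upper-triangular matrix. Since the eliminating matrix has determinant $1$, this gives $\deter{N} = (1 + v^{T} M^{-1} u)\,\deter{M}$.

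Second, I would instead eliminate using the trivially invertible bottom-right block (the scalar $1$), which requires no matrix inverse. Clearing the top-right entry $u$ replaces the top-left block by its Schur complement $M - u\cdot 1^{-1}\cdot(-v^{T}) = M + uv^{T}$, leaving a block lower-triangular matrix, so that $\deter{N} = \deter{M + uv^{T}}$. Equating the two expressions for $\deter{N}$ yields the claimed identity.

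I expect no real obstacle here, since the result is standard and both eliminations are routine; the only point requiring a moment's care is that the first step uses $M^{-1}$, justified exactly by the invertibility hypothesis, while the second uses no inverse at all. As an alternative, if one prefers to avoid the bordered matrix, one can factor $M + uv^{T} = M(I + M^{-1}u\,v^{T})$ and reduce to the rank-one case $\deter{I + xy^{T}} = 1 + y^{T}x$ with $x = M^{-1}u$ and $y = v$; the latter follows because $xy^{T}$ has rank at most one with a single nonzero eigenvalue $y^{T}x$ (equal to its trace), so that the eigenvalues of $I + xy^{T}$ are $1 + y^{T}x$ once and $1$ with multiplicity $m-1$, and the determinant is their product.
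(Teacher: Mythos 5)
Your proof is correct. Note that the paper itself gives no proof of Lemma~\ref{lemma:matrix-det} --- it is stated as a well-known fact --- so there is no argument of the paper's to compare yours against; any complete proof is an addition. Your bordered-matrix computation checks out: with $N = \bigl(\begin{smallmatrix} M & u \\ -v^{T} & 1 \end{smallmatrix}\bigr)$, eliminating against the invertible block $M$ gives $\deter{N} = (1 + v^{T}M^{-1}u)\deter{M}$, eliminating against the scalar $1$ gives $\deter{N} = \deter{M + uv^{T}}$, and in both cases the eliminating matrix is unit block-triangular, so equating the two expressions yields the identity. This is exactly the block Gaussian elimination technique the paper uses to prove Lemma~\ref{lemma:det-division}, so your argument fits naturally into the paper's toolkit. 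One small point about your alternative eigenvalue argument: when $v^{T}M^{-1}u = 0$ the phrase ``a single nonzero eigenvalue'' is vacuous, but the conclusion survives, since the characteristic polynomial of $xy^{T}$ is $\lambda^{m-1}(\lambda - y^{T}x)$ in all cases; hence the eigenvalues of $I + xy^{T}$, counted with algebraic multiplicity, are $1 + y^{T}x$ once and $1$ exactly $m-1$ times, and the determinant is their product regardless.
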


\section{Efficient volume sampling algorithms} \label{sec:algo-outline}
We first outline our volume sampling algorithm to convince the reader that volume sampling can be done in polynomial time. In the subsequent subsections, we give improved subroutines to get faster implementations of the same idea.

The main idea behind our algorithm is based on Lemma \ref{lemma:marginals} about the marginal probabilities encountered in volume sampling. To explain this, it is more convenient to look at volume sampling defined as a distribution on $k$-tuples $(X_{1}, X_{2}, \dotsc, X_{k})$ instead of $k$-subsets, where each of the $k!$ permutations of a $k$-subset is equally likely, i.e., for any $(i_{1}, i_{2}, \dotsc, i_{k}) \in [m]^{k}$,
\[
\prob{X_{1} = i_{1}, \dotsc, X_{k} = i_{k}} = \begin{cases} \dfrac{\deter{A_{\{i_{1}, \dotsc, i_{k}\}} A_{\{i_{1}, \dotsc, i_{k}\}}^{T}}}{k! \sum_{S \subseteq [m] \suchthat \size{S}=k} \deter{A_{S} A_{S}^{T}}} & \text{if $i_{1}, \dotsc, i_{k}$ are distinct} \\ \quad & \quad \\ 0 & \text{otherwise} \end{cases}
\]
Then the marginal probabilities $\prob{X_{t} = i \given X_{1} = i_{1}, \dotsc, X_{t-1} = i_{t-1}}$ have the following interpretation in terms of the coefficients of certain characteristic polynomials.
\begin{lemma} \label{lemma:marginals}
Let $(i_{1}, \dotsc, i_{t-1}) \in [m]^{t-1}$ such that $\prob{X_{1} = i_{1}, \dotsc, X_{t-1} = i_{t-1}} > 0$, for a random $k$-tuple $(X_{1}, X_{2}, \dotsc, X_{k})$ from the extended volume sampling over $k$-tuples. Let $S = \{i_{1}, \dotsc, i_{t-1}\}$, $B = A - \pi_{S}(A)$ and $C_{i} = B - \pi_{\{i\}}(B) = A - \pi_{S \cup \{i\}}(A)$. Then, \[
\prob{X_{t} = i \given X_{1} = i_{1}, \dotsc, X_{t-1} = i_{t-1}} = \frac{\norm{b_{i}}^{2} \abs{c_{m-k+t}(C_{i} C_{i}^{T})}}{(k-t+1) \abs{c_{m-k+t-1}(BB^{T})}}.
\]
\end{lemma}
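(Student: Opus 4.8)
The plan is to compute the conditional probability directly from its definition as a ratio of two joint probabilities,
\[
\prob{X_t = i \given X_1 = i_1, \dotsc, X_{t-1} = i_{t-1}} = \frac{\prob{X_1 = i_1, \dotsc, X_{t-1} = i_{t-1}, X_t = i}}{\prob{X_1 = i_1, \dotsc, X_{t-1} = i_{t-1}}},
\]
and to evaluate numerator and denominator separately by marginalizing the extended volume-sampling distribution over the undetermined coordinates. Writing $Z = k! \sum_{\card{S}=k} \deter{A_S A_S^T}$ for the common normalizer, the denominator becomes $Z^{-1}$ times a sum of $\deter{A_{\{i_1,\dotsc,i_{t-1}\}\cup T}A_{\{i_1,\dotsc,i_{t-1}\}\cup T}^T}$ over all ordered tuples $(i_t,\dotsc,i_k)$ of distinct indices (with underlying set $T$ of size $k-t+1$), and similarly for the numerator with $\{i_1,\dotsc,i_{t-1},i\}$ fixed and the free tuple $(i_{t+1},\dotsc,i_k)$ ranging over ordered tuples with underlying set $T$ of size $k-t$.

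First I would apply Lemma \ref{lemma:det-division} with $S = \{i_1,\dotsc,i_{t-1}\}$ to factor each determinant in the denominator as $\deter{A_S A_S^T}\deter{B_T B_T^T}$, pulling the common factor $\deter{A_S A_S^T}$ out of the sum; for the numerator I would instead take the fixed set to be $S \cup \{i\}$, writing each term as $\deter{A_{S\cup\{i\}}A_{S\cup\{i\}}^T}\deter{(C_i)_T (C_i)_T^T}$, and then use Lemma \ref{lemma:det-division} once more (with the second set equal to $\{i\}$) together with $B_{\{i\}}B_{\{i\}}^T = \norm{b_i}^2$ to obtain $\deter{A_{S\cup\{i\}}A_{S\cup\{i\}}^T} = \deter{A_S A_S^T}\norm{b_i}^2$. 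Next I would convert each ordered sum into an unordered one at the cost of $(k-t+1)!$ (resp. $(k-t)!$), observing that the rows of $B$ and of $C_i$ indexed by the already-chosen set vanish, so extending the range of $T$ from subsets disjoint from the fixed set to all size-$(k-t+1)$ (resp. size-$(k-t)$) subsets of $[m]$ adds only zero terms. This puts both sums exactly in the form of Lemma \ref{lemma:char-poly}, giving $\card{c_{m-(k-t+1)}(BB^T)} = \card{c_{m-k+t-1}(BB^T)}$ for the denominator and $\card{c_{m-(k-t)}(C_i C_i^T)} = \card{c_{m-k+t}(C_i C_i^T)}$ for the numerator.

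Finally I would form the ratio: the normalizer $Z$ and the factor $\deter{A_S A_S^T}$ cancel, and the factorials collapse to $(k-t)!/(k-t+1)! = 1/(k-t+1)$, leaving exactly the claimed expression. The computation is essentially bookkeeping, so the main thing to get right is that the two index conventions line up, namely the size of the free set $T$ and the corresponding coefficient index of the characteristic polynomial, and the off-by-one between the numerator (with $k-t$ free rows) and the denominator (with $k-t+1$ free rows). The one genuinely substantive step is the observation that the projected rows indexed by the fixed set are identically zero, which is what lets me replace the sum over subsets disjoint from the fixed set by the full sum needed to invoke Lemma \ref{lemma:char-poly}; the hypothesis that the conditioning event has positive probability guarantees $\deter{A_S A_S^T} \neq 0$, so that the division is legitimate.
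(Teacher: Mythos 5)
Your proposal is correct and follows essentially the same route as the paper's proof: expressing the conditional probability as a ratio of marginalized sums, applying Lemma \ref{lemma:det-division} twice (the paper factors out $S$ first and then $\{i\}$ from $B$, whereas you factor out $S\cup\{i\}$ first and then split $\deter{A_{S\cup\{i\}}A_{S\cup\{i\}}^{T}}$ --- an inessential reordering), extending the sums over subsets disjoint from the fixed set to all subsets because the projected rows vanish, and concluding with Lemma \ref{lemma:char-poly} and the factorial cancellation. Your explicit remark that positive conditioning probability guarantees $\deter{A_{S}A_{S}^{T}}\neq 0$ is a small point of rigor the paper leaves implicit.
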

\begin{proof}
\begin{align*}
& \prob{X_{t} = i \given X_{1} = i_{1}, \dotsc, X_{t-1} = i_{t-1}} \\
& = \frac{\sum_{(i_{t+1}, \dotsc, i_{k}) \in [m]^{k-t}} \prob{X_{1} = i_{1}, \dotsc, X_{t-1} = i_{t-1}, X_{t} = i, X_{t+1} = i_{t+1}, \dotsc, X_{k} = i_{k}}}{\sum_{(i_{t}, \dotsc, i_{k}) \in [m]^{k-t+1}} \prob{X_{1} = i_{1}, \dotsc, X_{t-1} = i_{t-1}, X_{t} = i_{t}, X_{t+1} = i_{t+1}, \dotsc, X_{k} = i_{k}}} \\
& = \frac{(k-t)! \sum_{T \subseteq [m] \suchthat \size{S \cup \{i\} \cup T}=k, \size{T}=k-t} \deter{A_{S \cup \{i\} \cup T} A_{S \cup \{i\} \cup T}^{T}}}{(k-t+1)! \sum_{T \subseteq [m] \suchthat \size{S \cup T}=k, \size{T}=k-t+1} \deter{A_{S \cup T} A_{S \cup T}^{T}}} \\
& = \frac{\sum_{T \subseteq [m] \suchthat \size{S \cup\{i\} \cup T}=k, \size{T}=k-t} \deter{A_{S} A_{S}^{T}} \deter{B_{\{i\} \cup T} B_{\{i\} \cup T}^{T}}}{(k-t+1) \sum_{T \subseteq [m] \suchthat \size{S \cup T}=k, \size{T}=k-t+1} \deter{A_{S} A_{S}^{T}} \deter{B_{T} B_{T}^{T}}} \quad \text{by Lemma \ref{lemma:det-division}} \\
& = \frac{\sum_{T \subseteq [m] \suchthat \size{S \cup \{i\} \cup T}=k, \size{T}=k-t} \norm{b_{i}}^{2} \deter{C_{T} C_{T}^{T}}}{(k-t+1) \sum_{T \subseteq [m] \suchthat \size{S \cup T}=k, \size{T}=k-t+1} \deter{B_{T} B_{T}^{T}}} \quad \text{by Lemma \ref{lemma:det-division} applied to $B$} \\
& = \frac{\norm{b_{i}}^{2} \sum_{T \subseteq [m] \suchthat \size{T}=k-t} \deter{C_{T} C_{T}^{T}}}{(k-t+1) \sum_{T \subseteq [m] \suchthat \size{T}=k-t+1} \deter{B_{T} B_{T}^{T}}} \quad \text{since the extra terms in the sum are all zero} \\
& = \frac{\norm{b_{i}}^{2} \abs{c_{m-k+t}(C_{i} C_{i}^{T})}}{(k-t+1) \abs{c_{m-k+t-1}(BB^{T})}} \quad \text{by Lemma \ref{lemma:char-poly}}.
\end{align*}
\end{proof}

With this lemma in hand, let us consider the following outline of our algorithm. We will later give two more efficient implementations of this outline, depending on how the $p_{i}$'s are computed.

\begin{framed}
\begin{alg}\label{alg:outline}
{\bf Outline of our volume sampling algorithm}
\end{alg}
\noindent Input: a matrix $A \in \reals^{m \times n}$ and $1 \leq k \leq \rank(A)$.

\noindent Output: a subset $S$ of $k$ rows of $A$ picked with probability proportional to $\deter{A_{S}A_{S}^{T}}$.

\begin{enumerate}
\item Initialize $S \leftarrow \emptyset$ and $B \leftarrow A$. For $t=1$ to $k$ do:
\begin{enumerate}
\item For $i=1$ to $m$ compute:
\[
p_{i} = \norm{b_{i}}^{2} \cdot \abs{c_{m-k+t} (C_{i} C_{i}^{T})},
\]
where $C_{i} = B - \pi_{\{i\}}(B)$ is a matrix obtained by projecting each row of $B$ orthogonal to $b_{i}$.
\item Pick $i$ with probability proportional to $p_{i}$. Let $S \leftarrow S \cup \{i\}$ and $B \leftarrow C_{i}$.
\end{enumerate}
\item Output $S$.
\end{enumerate}
\end{framed}

Now we show the correctness of the algorithm:
\begin{prop} \label{prop:outline-correct}
The probability that our volume sampling algorithm outlined above picks a $k$-subset $S$ is proportional to $\deter{A_{S} A_{S}^{T}}$. This algorithm can be implemented with a cost of $O(k m^3 n + km^{\omega+1} \log m)$ arithmetic operations.
\end{prop}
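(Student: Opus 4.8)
The plan is to split the argument into a correctness statement and a running-time accounting, with correctness resting almost entirely on Lemma \ref{lemma:marginals}. First I would observe that at the start of iteration $t$ the algorithm has already selected indices $i_1, \dots, i_{t-1}$ and the current matrix is exactly $B = A - \pi_S(A)$ with $S = \{i_1, \dots, i_{t-1}\}$, which matches the hypotheses of Lemma \ref{lemma:marginals}. The quantity $p_i = \norm{b_i}^2 \, \abs{c_{m-k+t}(C_i C_i^T)}$ computed in step (a) is precisely the numerator of the conditional probability in that lemma, while the denominator $(k-t+1)\abs{c_{m-k+t-1}(BB^T)}$ is independent of $i$. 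Hence picking $i$ with probability $p_i / \sum_j p_j$ reproduces exactly the law $\prob{X_t = i \mid X_1 = i_1, \dots, X_{t-1} = i_{t-1}}$; the self-normalization $\sum_j p_j$ automatically equals $(k-t+1)\abs{c_{m-k+t-1}(BB^T)}$ because the marginals of Lemma \ref{lemma:marginals} sum to one.

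By the chain rule, multiplying these conditional probabilities over $t = 1, \dots, k$ shows that the ordered output $(i_1, \dots, i_k)$ has exactly the extended volume-sampling law, namely $\deter{A_{\{i_1,\dots,i_k\}} A_{\{i_1,\dots,i_k\}}^T}\big/\bigl(k!\sum_{\size{S}=k} \deter{A_S A_S^T}\bigr)$. Since the algorithm returns the unordered set $S$ and there are $k!$ equiprobable orderings of each $k$-subset of distinct indices, the factor $k!$ cancels and the probability of returning a given $S$ is proportional to $\deter{A_S A_S^T}$, as claimed. Two small points I would verify: an index already in $S$ has $b_i = 0$, hence $p_i = 0$, so indices are never repeated; and because $k \le \rank(A)$ we keep $\rank(B) \ge k - (t-1)$ throughout, so $\abs{c_{m-k+t-1}(BB^T)} > 0$ and some $p_i$ is positive at every step, guaranteeing the algorithm always completes a full $k$-subset.

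For the running time I would account the cost of the inner loop over $i$ at a fixed $t$. Forming $C_i = B - \norm{b_i}^{-2}(Bb_i)b_i^T$ is a rank-one update computable in $O(mn)$ operations; forming the $m \times m$ Gram matrix $C_i C_i^T$ costs $O(m^2 n)$; computing its characteristic polynomial, to read off $c_{m-k+t}$, costs $O(m^\omega \log m)$ by the cited algorithm, after which $p_i$ follows in $O(n)$ more operations. Summing over the $m$ choices of $i$ gives $O(m^3 n + m^{\omega+1}\log m)$ per value of $t$, and the outer loop over $t = 1, \dots, k$ multiplies this by $k$, yielding the stated $O(km^3n + km^{\omega+1}\log m)$.

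Neither half is conceptually deep once Lemma \ref{lemma:marginals} is in hand; the care needed is (i) confirming that the self-normalizing choice $p_i/\sum_j p_j$ coincides with the lemma's conditional probability even though the $i$-independent denominator is dropped, and (ii) the bookkeeping identifying the $O(m^2 n)$ Gram-matrix formation and the $O(m^\omega \log m)$ characteristic-polynomial computation, repeated $km$ times, as the dominant costs. The main inefficiency I would flag, and the one the later subroutines are designed to remove, is that this naive implementation recomputes a full $m \times m$ characteristic polynomial for every $i$ and every $t$, which is exactly why the dependence on $m$ comes out cubic rather than linear.
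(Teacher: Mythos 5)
Your proposal is correct and follows essentially the same route as the paper's proof: correctness via Lemma \ref{lemma:marginals} and the chain rule over the $k$ steps (with the $k!$ orderings collapsing to the unordered subset), and the running time via the per-iteration cost $O(m^2 n + m^{\omega}\log m)$ for each of the $km$ pairs $(t,i)$. You are in fact slightly more careful than the paper, which leaves implicit the self-normalization $\sum_j p_j = (k-t+1)\abs{c_{m-k+t-1}(BB^T)}$ and the non-degeneracy guarantee from $k \leq \rank(A)$ that you spell out.
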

\begin{proof}
By Lemma \ref{lemma:marginals}, for any $i_{1}, i_{2}, \dotsc, i_{k}$ such that $\prob{X_{1} = i_{1}, \dotsc, X_{k} = i_{k}}$, the probability that our algorithm picks a sequence of rows indexed $i_{1}, i_{2}, \dotsc, i_{k}$ in that order is equal to
\[
\prod_{t=1}^{k} \prob{X_{t} = i_{t} \given X_{1} = i_{1}, \dotsc, X_{t-1} = i_{t-1}} = \prob{X_{1} = i_{1}, \dotsc, X_{k} = i_{k}} = \frac{\deter{A_{\{i_{1}, \dotsc, i_{k}\}} A_{\{i_{1}, \dotsc, i_{k}\}}^{T}}}{k! \sum_{S \subseteq [m] \suchthat \size{S}=k} \deter{A_{S} A_{S}^{T}}}.
\]
Otherwise, the probability is zero because in
the execution of the algorithm,
$\norm{b_{i}} = 0$ for some step $t$. This proves the correctness of our algorithm.

Given that one can compute the characteristic polynomial of an $m$-by-$m$ matrix in $O(m^\omega \log m)$ (see Section \ref{sec:prelim}), our outline can be implemented with the following count of arithmetic operations: for every $t$ and $i$, $O(m^2 n)$ to compute $C_i C_i^T$, $O(m^2 n + m^\omega \log m)$ in total for $p_i$. Thus, volume sampling in $O(k m^3 n + km^{\omega +1} \log m)$.
\end{proof}

\subsection{Efficient volume sampling without SVD} \label{subsec:no-svd}
Here we present the first (faster) subroutine for computing the marginal probabilities $p_{i}$'s within the volume sampling algorithm outlined in Section \ref{sec:algo-outline}. The two main ideas behind this subroutine are: (1) We can work with $B^{T}B, C_{i}^{T} C_i \in \reals^{n \times n}$ instead of $BB^{T}, C_{i} C_{i}^{T} \in \reals^{m \times m}$. Assuming $m \geq n$, this saves on running time. (2) Each $C_{i}$ is a rank-$1$ update of $B$ and therefore, once we have $B^{T}B$, it can be used to compute all $C_{i}^{T} C_{i}$ efficiently.

\begin{framed}
\begin{alg}\label{alg:omegaSub}
{\bf First subroutine for marginal probabilities}
\end{alg}
\noindent Input: $B \in \reals^{m \times n}$.

\noindent Output: $p_{1}, p_{2}, \dotsc, p_{m}$. \\

For $i=1$ to $m$ do:
\begin{enumerate}
\item Compute the matrix $C_{i}^T C_{i} \in \reals^{n \times n}$ by the following formula
\[
C_{i}^{T} C_{i} = B^{T} B - \frac{B^{T} B b_{i} b_{i}^{T}}{\norm{b_{i}}^{2}} - \frac{b_{i} b_{i}^{T} B^{T} B}{\norm{b_{i}}^{2}} + \frac{b_{i} b_{i}^{T} B^{T} B b_{i} b_{i}^{T}}{\norm{b_{i}}^{4}}.
\]
\item Compute the characteristic polynomial of $C_{i}^{T} C_{i}$ and output
\[
p_{i} = \norm{b_{i}}^{2} \cdot \abs{c_{n-k+t} (C_{i}^{T} C_{i})}.
\]
\end{enumerate}
\end{framed}

\begin{prop} \label{prop:omegaVS}
For any given $B \in \reals^{m \times n}$, the Algorithm \ref{alg:omegaSub} above computes $p_{1}, \dotsc, p_{m}$ in $O(mn^{\omega} \log n)$ arithmetic operations.
\end{prop}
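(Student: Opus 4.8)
The plan is a direct arithmetic-operation count whose only real content is isolating the dominant term and exploiting the rank-$1$ structure correctly. The key observation is that, within a single call to the subroutine, the matrix $B$ (and hence $B^{T}B$) is fixed, so I would first compute $B^{T}B \in \reals^{n \times n}$ once, before the loop over $i$. This costs $O(mn^{\omega-1})$ operations: partition the product of the $n \times m$ and $m \times n$ matrices into $\lceil m/n \rceil$ blocks along the shared dimension, each an $n \times n$ product costing $O(n^\omega)$. (A naive $O(mn^2)$ evaluation is also within budget.) With $B^{T}B$ available, the per-$i$ work splits cleanly into (a) assembling $C_i^{T}C_i$ and (b) computing its characteristic polynomial.

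For step (a), the point is to evaluate the rank-$1$ update formula respecting associativity so that no $n \times n$ by $n \times n$ product is ever formed. First compute the scalar $\norm{b_i}^{2} = b_i^{T} b_i$ in $O(n)$ and the vector $w_i := B^{T}B\, b_i \in \reals^{n}$ by one matrix-vector product in $O(n^2)$. Then each of the four terms is a scalar, an outer product, or a scalar multiple of one: for instance $B^{T}B\, b_i b_i^{T}/\norm{b_i}^{2} = w_i b_i^{T}/\norm{b_i}^{2}$, its transpose $b_i b_i^{T} B^{T}B/\norm{b_i}^{2} = b_i w_i^{T}/\norm{b_i}^{2}$, and $b_i b_i^{T} B^{T}B\, b_i b_i^{T}/\norm{b_i}^{4} = (b_i^{T} w_i/\norm{b_i}^{4})\, b_i b_i^{T}$. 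Each is formed in $O(n^2)$, so $C_i^{T}C_i$ is assembled in $O(n^2)$, for a total of $O(mn^2)$ over all $i$.

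For step (b), computing the characteristic polynomial of the $n \times n$ matrix $C_i^{T}C_i$ costs $O(n^\omega \log n)$ by the algorithm cited in Section \ref{sec:prelim}; extracting the coefficient $c_{n-k+t}$ and multiplying by $\norm{b_i}^{2}$ adds only $O(n)$. Summed over $i = 1, \dotsc, m$ this gives $O(mn^\omega \log n)$. Comparing the three contributions $O(mn^{\omega-1})$, $O(mn^2)$, and $O(mn^\omega \log n)$, and using $\omega \geq 2$ so that $n^{\omega-1} \leq n^2 \leq n^\omega \log n$, the characteristic-polynomial step dominates and the overall cost is $O(mn^\omega \log n)$, as claimed.

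The only genuine subtlety — the ``main obstacle,'' such as it is — lies in step (a): one must evaluate the products in the order (matrix)$\times$(vector) then (vector)$\times$(row vector), since a naive evaluation that first builds $b_i b_i^{T}$ and then multiplies by $B^{T}B$ would cost $O(n^\omega)$ per $i$, defeating the purpose of the rank-$1$ update even if it did not change the leading term. Correctness of the formula itself, namely $C_i^{T}C_i = (I-P)B^{T}B(I-P)$ with $P = b_i b_i^{T}/\norm{b_i}^{2}$, follows by expanding $C_i = B(I-P)$ and is not needed for the timing claim.
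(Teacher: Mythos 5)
Your proof is correct and follows essentially the same route as the paper's: compute $B^{T}B$ once up front, assemble each $C_i^{T}C_i$ via the rank-one update formula in $O(n^2)$ per index, and let the $m$ characteristic-polynomial computations at $O(n^{\omega}\log n)$ each dominate. You even spell out the matrix-vector evaluation order that the paper leaves implicit; the only point the paper adds that you omit is the identity $c_{m-k+t}(C_i C_i^{T}) = c_{n-k+t}(C_i^{T}C_i)$ from Lemma \ref{lemma:char-poly}, which justifies that the subroutine's output equals the $p_i$ required by Algorithm \ref{alg:outline} but plays no role in the operation count.
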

\begin{proof}
$B^{T}B$ can be computed in time $O(mn^{2})$. Observe that since $C_{i}$ is obtained by projecting each row of $B$ orthogonal to $b_{i}$,
\[
C_{i} = B - \frac{1}{\norm{b_{i}}^{2}} Bb_{i}b_{i}^{T},
\]
and therefore,
\[
C_{i}^{T} C_{i} = B^{T} B - \frac{B^{T} B b_{i} b_{i}^{T}}{\norm{b_{i}}^{2}} - \frac{b_{i} b_{i}^{T} B^{T} B}{\norm{b_{i}}^{2}} + \frac{b_{i} b_{i}^{T} B^{T} B b_{i} b_{i}^{T}}{\norm{b_{i}}^{4}}.
\]
So once we have $B^{T}B$, for each $i$, $C_{i}^{T} C_{i}$ can be computed in time $O(n^{2})$ and the characteristic polynomial of $C_{i}^{T} C_{i}$ can be computed in time $O(n^{\omega} \log n)$ \cite[Section 16.6]{ACT}. By Lemma \ref{lemma:char-poly}, $c_{m-k+t} (C_{i} C_{i}^{T}) = c_{n-k+t} (C_{i}^{T} C_{i})$ and hence, the above subroutine results into an $O(kmn^{\omega} \log n)$ time algorithm for volume sampling.
\end{proof}

\begin{theorem}[same as Theorem \ref{thm:polytimeVS}]
The randomized algorithm given by the combination of the algorithm outlined in Algorithm \ref{alg:outline} with Algorithm \ref{alg:omegaSub} as its subroutine, when given a matrix $A \in \reals^{m \times n}$ and an integer $1 \leq k \leq \rank(A)$, outputs a random $k$-subset of the rows of $A$ according to volume sampling, using $O(kmn^\omega \log n)$ arithmetic operations.
\end{theorem}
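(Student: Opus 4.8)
The plan is to read the statement as a synthesis of the two results already established for its two components: the correctness of the sampling scheme comes from Proposition~\ref{prop:outline-correct} for the outline, and the per-iteration cost comes from Proposition~\ref{prop:omegaVS} for the subroutine. So I would split the argument into a correctness part and a running-time part, and in each part simply check that plugging Algorithm~\ref{alg:omegaSub} into Algorithm~\ref{alg:outline} does not change what was proved about each piece in isolation. The whole theorem is then a corollary obtained by composing the two propositions through the shared interface, which is the quantity $p_i$.

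For correctness, the key point is that the two algorithms must agree on what they call $p_i$. Proposition~\ref{prop:outline-correct} shows that the outline samples the $t$-th index with the correct marginal probability identified in Lemma~\ref{lemma:marginals}, provided that at iteration $t$ it uses
\[
p_i = \norm{b_i}^2 \cdot \abs{c_{m-k+t}(C_i C_i^T)}.
\]
Algorithm~\ref{alg:omegaSub}, on the other hand, outputs $\norm{b_i}^2 \cdot \abs{c_{n-k+t}(C_i^T C_i)}$. So I would invoke the last assertion of Lemma~\ref{lemma:char-poly}, namely $c_{m-k'}(MM^T) = c_{n-k'}(M^TM)$ for $k' \le n$, with $k' = k - t$ and $M = C_i$, to conclude that these two expressions for $p_i$ coincide. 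The side condition $k - t \le n$ holds throughout, since $t \ge 1$ and $k \le \rank(A) \le n$. I would also confirm that the rank-$1$ update formula for $C_i^T C_i$ used by the subroutine is exactly the one derived in the proof of Proposition~\ref{prop:omegaVS}, so that the matrix whose characteristic polynomial is taken is $C_i^T C_i$ with $C_i = B - \pi_{\{i\}}(B)$, matching the definition in the outline. With $p_i$ thus identified, the correctness claim is inherited verbatim from Proposition~\ref{prop:outline-correct}.

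For the running time, the plan is to bound the cost of one iteration and multiply by $k$. At each of the $k$ iterations the outline calls Algorithm~\ref{alg:omegaSub} once on the current matrix $B \in \reals^{m \times n}$; by Proposition~\ref{prop:omegaVS} this costs $O(mn^\omega \log n)$ arithmetic operations, which already absorbs the $O(mn^2)$ cost of forming $B^T B$ because $\omega \ge 2$. The only additional per-iteration work in the outline is sampling one index from $m$ nonnegative weights and performing the update $B \leftarrow C_i$, both dominated by $O(mn^\omega \log n)$. Summing over $t = 1, \dots, k$ yields the claimed $O(kmn^\omega \log n)$.

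Finally, I expect no substantive obstacle: the theorem is modular, and its content is entirely carried by the two preceding propositions. The one place that deserves care is the index bookkeeping in the correctness part, namely verifying that the characteristic-polynomial coefficient indices $m-k+t$ and $n-k+t$ line up correctly under Lemma~\ref{lemma:char-poly} for every $t$ and that the side condition $k-t \le n$ is met at each iteration (including the boundary $t=k$, where both indexed coefficients are leading coefficients equal to $1$). Everything else is a matter of confirming that the assumptions of the two propositions meet exactly at the interface they share through $p_i$.
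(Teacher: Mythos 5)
Your proposal is correct and follows essentially the same route as the paper: the paper's proof is exactly the composition of Proposition~\ref{prop:outline-correct} (correctness of the outline) with Proposition~\ref{prop:omegaVS} (per-round cost of the subroutine), with the identification $c_{m-k+t}(C_iC_i^T) = c_{n-k+t}(C_i^TC_i)$ via Lemma~\ref{lemma:char-poly} handled inside the proof of Proposition~\ref{prop:omegaVS} rather than at the composition step as you do. Your extra bookkeeping on the side condition $k-t\le n$ is a valid elaboration of the same argument, not a different approach.
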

\begin{proof}
The proof follows by combining Proposition \ref{prop:outline-correct} and Proposition \ref{prop:omegaVS}, and since we compute all the $p_{i}$'s simultaneously in each round in $O(mn^{\omega} \log n)$ arithmetic operations, the total number of arithmetic operations is $O(kmn^{\omega} \log n)$.
\end{proof}

\subsection{Efficient volume sampling using SVD} \label{subsec:svd}
Taking further the idea that each $C_{i}$ is a rank-$1$ update of $B$, we can give a faster algorithm based on the singular value decomposition of $B$. Given the singular value decomposition of a matrix and using the matrix determinant lemma (Lemma \ref{lemma:matrix-det}), one can give a precise formula for how the characteristic polynomial changes under a rank-$1$ update. Using this subroutine in the volume sampling algorithm outlined in Section \ref{sec:algo-outline} we get an algorithm for nearly-exact volume sampling (depending on the precision of the computed SVD) in time $O(kT_{svd} + kmn^{2})$, where $T_{svd}$ is the running time of SVD on an $m$-by-$n$ matrix.

\begin{framed}
\begin{alg}\label{alg:svdSub}
{\bf Second subroutine for marginal probabilities.}
\end{alg}
\noindent Input: $B \in \reals^{m \times n}$. \\
\noindent Output: $p_{1}, p_{2}, \dotsc, p_{m}$.

\begin{enumerate}
\item Compute the (thin) singular value decomposition $B=U \Sigma V^{T}$, say $U \in \reals^{m \times n}$ and $\Sigma, V \in \reals^{n \times m}$, and keep the singular values $\sigma_{1}, \sigma_{2}, \dotsc, \sigma_{n}$ and define $\sigma_{n+1} = \dotsc = \sigma_{m} = 0$. Also keep the columns of $U$, i.e., the left singular vectors $u_{1}, u_{2}, \dotsc, u_{n} \in \reals^{m}$.
\item Compute the polynomial products
\begin{align*}
f(x) & = \prod_{l=1}^{m} (x - \sigma_{l}^{2}), \quad \text{and} \\
g_{j}(x) & = \prod_{l \neq j} (x - \sigma_{l}^{2}), \quad \text{for all $1 \leq j \leq m$.}
\end{align*}
\item For $i=1$ to $m$ output:
\[
p_{i} = \norm{b_{i}}^{2} \cdot \abs{\text{coefficient of $x^{m-k+t}$ in $f(x) + \dfrac{1}{\norm{b_{i}}^{2}} \sum_{j=1}^{n} \sigma_{j}^{2} (u_{j})_{i}^{2} g_{j}(x)$}}.
\]
\end{enumerate}
\end{framed}
\begin{prop} \label{prop:svdVS}
In the real arithmetic model and given exact $U$ and $\Sigma$, using the Algorithm \ref{alg:svdSub} as a subroutine inside Algorithm \ref{alg:outline} outlined for volume sampling, we get an algorithm for volume sampling. If $T_{svd}$ is the running time for computing the singular value decomposition of $m$-by-$n$ matrices, the algorithm runs in time $O(k T_{svd} + kmn^2)$.
\end{prop}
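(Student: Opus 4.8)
The plan is to prove two things. First, correctness: that Algorithm \ref{alg:svdSub}, used as the subroutine inside Algorithm \ref{alg:outline}, computes exactly the weights $p_i = \norm{b_i}^2\,\abs{c_{m-k+t}(C_iC_i^T)}$ required by Lemma \ref{lemma:marginals}, so that the correctness of the whole sampler is inherited from Proposition \ref{prop:outline-correct} (all of this in the real-arithmetic model, granting exact $U,\Sigma$). Second, the operation count. For correctness the starting point is the rank-one structure already exposed in the proof of Proposition \ref{prop:omegaVS}: since $C_i = B - \frac{1}{\norm{b_i}^2}Bb_ib_i^T$,
\[
C_iC_i^T = BB^T - \frac{1}{\norm{b_i}^2}\,w_iw_i^T, \qquad w_i = Bb_i = BB^Te_i,
\]
so $C_iC_i^T$ is a rank-one downdate of the Gram matrix $BB^T$, with $w_i$ its $i$-th column and $\norm{b_i}^2 = (BB^T)_{ii}$.

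The heart of the argument is to convert this downdate into the polynomial of Step 3 via the SVD and the matrix determinant lemma (Lemma \ref{lemma:matrix-det}). From the thin SVD $B=U\Sigma V^T$, $BB^T=\sum_{j=1}^n\sigma_j^2u_ju_j^T$ has characteristic polynomial $\det(xI-BB^T)=\prod_{l=1}^m(x-\sigma_l^2)=f(x)$ (with $\sigma_{n+1}=\dots=\sigma_m=0$). Applying Lemma \ref{lemma:matrix-det} with $M=xI-BB^T$, and using that $w_i$ lies in $\linspan{u_1,\dots,u_n}$ so that $w_i^TM^{-1}w_i=\sum_{j=1}^n\frac{(u_j^Tw_i)^2}{x-\sigma_j^2}$, gives
\[
\det(xI-C_iC_i^T)=\Bigl(1+\tfrac{1}{\norm{b_i}^2}\,w_i^TM^{-1}w_i\Bigr)\det M=f(x)+\frac{1}{\norm{b_i}^2}\sum_{j=1}^{n}\bigl(u_j^Tw_i\bigr)^2\,g_j(x),
\]
where $g_j(x)=f(x)/(x-\sigma_j^2)$. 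The computation that collapses this to Step 3 is the eigenvector identity $u_j^Tw_i=u_j^TBB^Te_i=\sigma_j^2(u_j)_i$, giving the weight $(u_j^Tw_i)^2=\sigma_j^4(u_j)_i^2$ on $g_j$; this is exactly the polynomial whose coefficient of $x^{m-k+t}$ Step 3 reads off. Two points need care. Lemma \ref{lemma:matrix-det} requires $M$ invertible, which fails at $x=\sigma_j^2$, so I would read the identity as one of polynomials in $x$: both sides agree as rational functions off the finitely many $x=\sigma_j^2$ and both are polynomials, hence they agree identically. And by Lemma \ref{lemma:char-poly} the coefficient of $x^{m-k+t}$ in $\det(xI-C_iC_i^T)$ is $c_{m-k+t}(C_iC_i^T)$ up to the sign swallowed by $\abs{\cdot}$, so $p_i=\norm{b_i}^2\abs{c_{m-k+t}(C_iC_i^T)}$, matching Lemma \ref{lemma:marginals}.

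For the running time, the cost is dominated by one SVD per round. Within a round the work beyond the SVD is: forming $\prod_{j=1}^n(x-\sigma_j^2)$ and its $n$ quotients by the linear factors, in $O(n^2)$, from which the single coefficient of $x^{m-k+t}$ in each $g_j$ is read off (the extra factor $x^{m-n}$ is a harmless index shift); the norms $\norm{b_i}^2=\sum_j\sigma_j^2(u_j)_i^2$ for all $i$, in $O(mn)$; for each $i$ the coefficient of $x^{m-k+t}$ in the full sum, an $O(n)$-term weighted combination of the precomputed $g_j$-coefficients, hence $O(mn)$ over all $i$; and forming the next matrix $B\leftarrow C_i$, in $O(mn)$. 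The non-SVD overhead per round is thus $O(mn)$, comfortably within the stated $O(mn^2)$, and over the $k$ rounds the total is $O(kT_{svd}+kmn^2)$.

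The main obstacle is entirely in the correctness step: pushing the rank-one downdate of $BB^T$ through the matrix determinant lemma and recognizing, via $u_j^TBB^Te_i=\sigma_j^2(u_j)_i$, that the resulting rational function is in fact the polynomial of Step 3. The invertibility gap in Lemma \ref{lemma:matrix-det} is closed by the polynomial-identity argument, and the dimension bookkeeping $c_{m-k+t}(C_iC_i^T)=c_{n-k+t}(C_i^TC_i)$ from Lemma \ref{lemma:char-poly} reconciles the $m$-indexed coefficients used here with the $n$-indexed ones elsewhere; the operation count and the exact-$(U,\Sigma)$ caveat are then routine.
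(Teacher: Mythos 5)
Your route is the same one the paper takes: write $C_iC_i^T$ as a rank-one downdate of $BB^T$, apply the matrix determinant lemma (Lemma \ref{lemma:matrix-det}) to $M=xI-BB^T$, diagonalize via the SVD, read off a single coefficient, and inherit correctness of the overall sampler from Proposition \ref{prop:outline-correct}. Your two points of extra care are both sound: the invertibility issue is indeed closed by the polynomial-identity argument (the paper does this silently), and your operation count is even sharper than the paper's ($O(mn)$ overhead per round beyond the SVD, versus the paper's cruder $O(mn^2)$, both inside the stated bound).

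The one place your write-up goes wrong is exactly where you assert agreement with the printed algorithm. Your derived weight on $g_j$ is $(u_j^Tw_i)^2=\sigma_j^4(u_j)_i^2$, which is correct; but Step 3 of Algorithm \ref{alg:svdSub}, and the matching line of the paper's own proof, carry the weight $\sigma_j^2(u_j)_i^2$. These do not agree, so your sentence ``this is exactly the polynomial whose coefficient of $x^{m-k+t}$ Step 3 reads off'' is false as written. The discrepancy, however, is an error in the paper, not in your algebra: in passing from $b_i^T\tilde V\tilde\Sigma^T(xI-\tilde\Sigma^2)^{-1}\tilde\Sigma\tilde V^Tb_i$ to a sum of partial fractions one has $(\tilde\Sigma\tilde V^Tb_i)_j=\sigma_j^2(\tilde u_j)_i$, so the numerators are $\sigma_j^4(\tilde u_j)_i^2$, and the paper dropped a factor $\sigma_j^2$. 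A two-by-two check settles it: for $B=\operatorname{diag}(2,1)$ and $i=1$, $C_1C_1^T=\operatorname{diag}(0,1)$ has characteristic polynomial $x(x-1)$; your formula gives $(x-4)(x-1)+\tfrac{1}{4}\cdot 16\cdot(x-1)=x(x-1)$, while the printed formula gives $(x-4)(x-1)+(x-1)=(x-1)(x-3)$, which is wrong. So your argument in fact proves the proposition for Algorithm \ref{alg:svdSub} with Step 3 corrected to use $\sigma_j^4(u_j)_i^2$ (equivalently $(u_j^TBb_i)^2$), and in doing so exposes an erratum in the paper; the only repair your proof needs is to state this correction explicitly rather than declare the two polynomials equal.
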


%
%
\begin{proof}
Using the matrix determinant lemma (Lemma \ref{lemma:matrix-det}), the characteristic polynomial of $C_{i} C_{i}^{T}$ can be written as
\begin{align*}
\deter{xI - C_{i} C_{i}^{T}} & = \deter{xI - BB^{T} + \frac{1}{\norm{b_{i}}^{2}} (Bb_{i})(Bb_{i})^{T}} \\
& = \left(1 + \frac{1}{\norm{b_{i}}^{2}} b_{i}^{T} B^{T} (xI - BB^{T})^{-1} Bb_{i}\right) \deter{xI - BB^{T}} \\
& = \left(1 + \frac{1}{\norm{b_{i}}^{2}} b_{i}^{T} B^{T} (xI - \tilde{U}\tilde{\Sigma}^{2}\tilde{U}^{T})^{-1} Bb_{i}\right) \deter{xI - BB^{T}} \\
& \quad \text{by extending $U, \Sigma, V$ to get $B = \tilde{U} \tilde{\Sigma} \tilde{V}^{T}$ with $\tilde{U}, \tilde{\Sigma} \in \reals^{m \times m}$ and $\tilde{V} \in \reals^{m \times n}$} \\
& = \left(1 + \frac{1}{\norm{b_{i}}^{2}} b_{i}^{T} B^{T}\tilde{U} (xI - \tilde{\Sigma}^{2})^{-1} \tilde{U}^{T}B b_{i}\right) \deter{xI - BB^{T}} \\
& = \left(1 + \frac{1}{\norm{b_{i}}^{2}} b_{i}^{T} \tilde{V}\tilde{\Sigma}^{T} (xI - \tilde{\Sigma}^{2})^{-1} \tilde{\Sigma} \tilde{V}^{T}b_{i}\right) \deter{xI - BB^{T}} \\
& = \left(1 + \frac{1}{\norm{b_{i}}^{2}} \sum_{j=1}^{m} \frac{\sigma_{j}^{2} (\tilde{u}_{j})_{i}^{2}}{x - \sigma_{j}^{2}}\right) \prod_{l=1}^{m} (x - \sigma_{l}^{2}) \\
& = \left(1 + \frac{1}{\norm{b_{i}}^{2}} \sum_{j=1}^{n} \frac{\sigma_{j}^{2} (u_{j})_{i}^{2}}{x - \sigma_{j}^{2}}\right) \prod_{l=1}^{m} (x - \sigma_{l}^{2}) \\
& = \prod_{l=1}^{m} (x - \sigma_{l}^{2}) + \frac{1}{\norm{b_{i}}^{2}} \sum_{j=1}^{n} \sigma_{j}^{2} (u_{j})_{i}^{2} \prod_{l \neq j} (x - \sigma_{l}^{2}) \\
& = f(x) + \frac{1}{\norm{b_{i}}^{2}} \sum_{j=1}^{n} \sigma_{j}^{2} (u_{j})_{i}^{2} g_{j}(x).
\end{align*}
Thus,
\[
c_{m-k+t} (C_{i} C_{i}^{T}) = \text{coefficient of $x^{m-k+t}$ in $f(x) + \dfrac{1}{\norm{b_{i}}^{2}} \sum_{j=1}^{n} \sigma_{j}^{2} (u_{j})_{i}^{2} g_{j}(x)$}.
\]
Once we have the singular value decomposition of $B$, $f(x)$ and $g_{j}(x)$ can all be computed in time $O(n^{2})$ using polynomial products. This is because there are at most $n$ non-zero $\sigma_{i}$'s. Thus, $f(x)$ and all the $g_{j}(x)$ for $1 \leq j \leq m$ can be computed in time $O(mn^{2})$ and then using the above formula we get $c_{m-k+t} (C_{i} C_{i}^{T})$.
\end{proof}

\subsection{Approximate volume sampling in nearly linear time}\label{subsec:random-proj}
Magen and Zouzias \cite{MZ} showed that the random projection lemma of Johnson and Lindenstrauss can be generalized to preserve volumes of subsets after embedding. Here is a restatement of Theorem 1 of \cite{MZ} using $O(\eps/k)$ instead of $\epsilon$ in their original statement.
\begin{theorem} \label{thm:random-projection} \cite{MZ}
For any $A \in \reals^{m \times n}$, $1\leq k \leq n$ and $0 < \eps \leq 1/2$, there is
\[
d = O\left(\frac{k^{2} \log m}{\eps^{2}}\right),
\]
and there is a mapping $f: \reals^{n} \rightarrow \reals^{d}$ such that
\[
\deter{A_{S} A_{S}^{T}} \leq \deter{\tilde{A}_{S} \tilde{A}_{S}^{T}} \leq (1+\eps) \deter{A_{S} A_{S}^{T}},
\]
for all $S \subseteq [m]$ such that $\card{S} \leq k$, where $\tilde{A} \in \reals^{m \times d}$ has its $i$-th row as $f(a_{i})$. Moreover, $f$ is a linear mapping given by multiplication with a random $n$ by $d$ matrix with i.i.d. Gaussian entries, so computing $\tilde{A}$ takes time $O(mnd)$.
\end{theorem}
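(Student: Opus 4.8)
The plan is to read this as a Johnson--Lindenstrauss statement for volumes rather than pairwise distances, exploiting the fact that the only randomness is a single Gaussian matrix. Write $f(a) = R^{T} a$ for a random $R \in \reals^{n \times d}$ with i.i.d.\ $N(0,1/d)$ entries, so $\tilde{A} = A R$ and $\tilde{A}_{S} = A_{S} R$. First I would reduce the distortion of a \emph{single} subset to a quantity that no longer depends on $A$. Fix $S$ with $\card{S} = j$ and let $A_{S} = U_{S} \Sigma_{S} V_{S}^{T}$ be a thin SVD, where $U_{S} \in \reals^{j \times j}$ is orthogonal and $V_{S} \in \reals^{n \times j}$ has orthonormal columns. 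Since the columns of $V_{S}$ are orthonormal and the Gaussian law is rotation invariant, $G := V_{S}^{T} R \in \reals^{j \times d}$ again has i.i.d.\ $N(0,1/d)$ entries, and a short computation gives
\[
\deter{\tilde{A}_{S} \tilde{A}_{S}^{T}} = \deter{A_{S} A_{S}^{T}} \cdot \deter{G G^{T}}.
\]
When $A_{S}$ is rank-deficient both sides vanish and the claimed inequalities hold trivially. The crucial point is that the \emph{distortion factor} $\deter{G G^{T}}$ has a distribution depending only on $j$ and $d$, not on $A$ or on which rows $S$ selects.

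Next I would pin down that distribution. Writing $G = H/\sqrt{d}$ with $H$ standard Gaussian, the Bartlett decomposition of the Wishart matrix $H H^{T}$ shows that $\deter{G G^{T}}$ is distributed as $d^{-j} \prod_{\ell=0}^{j-1} \chi^{2}_{d-\ell}$, a product of $j \leq k$ \emph{independent} chi-squared variables. Taking logarithms turns this into a sum of $j$ independent sub-exponential terms $\log(\chi^{2}_{d-\ell}/d)$, each with mean $\approx -\ell/d$ and variance $\Theta(1/d)$, so Bernstein's inequality yields
\[
\prob{\abs{\log \deter{G G^{T}} - \mu_{j}} \geq t} \leq 2 \exp\left(-c\, d \min\{t^{2}/j,\, t\}\right),
\]
with $\mu_{j} = \sum_{\ell} \log(1 - \ell/d) = -\Theta(j^{2}/d)$. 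The reason this route reaches the stated dimension is that it controls the \emph{product} of the singular values of $G$ directly: the more naive approach of demanding that $G$ be a $(1\pm\delta)$-isometry on the $j$-dimensional row space of $A_{S}$ would force $\delta = \Theta(\eps/k)$ together with a worst-singular-value bound $\sqrt{j/d} \lesssim \delta$, costing $d = \Theta(k^{3} \log m / \eps^{2})$; concentrating the determinant instead of every singular value is exactly what saves the extra factor of $k$.

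I would then set $t = \Theta(\eps)$ and union bound over all subsets of size at most $k$, of which there are at most $(m+1)^{k}$, so the per-subset failure probability must be at most $m^{-k}/10$, i.e.\ $c\, d \min\{\eps^{2}/k, \eps\} \gtrsim k \log m$. For $\eps \leq 1/2$ the $\eps^{2}/k$ term binds, giving $d = O(k^{2} \log m / \eps^{2})$; on the good event every $\card{S} \leq k$ has $\abs{\log \deter{G G^{T}} - \mu_{j}} \leq \Theta(\eps)$, and since $\abs{\mu_{j}} = O(k^{2}/d) = O(\eps^{2}/\log m)$ is negligible, $\deter{\tilde{A}_{S} \tilde{A}_{S}^{T}}$ lands within a multiplicative $(1 \pm \Theta(\eps))$ of $\deter{A_{S} A_{S}^{T}}$. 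Rescaling $R$ by a constant factor $1 + \Theta(\eps)$ then shifts this two-sided guarantee into the one-sided form $\deter{A_{S} A_{S}^{T}} \leq \deter{\tilde{A}_{S} \tilde{A}_{S}^{T}} \leq (1+\eps)\deter{A_{S} A_{S}^{T}}$ of the statement, and the running time $O(mnd)$ is just the cost of forming $A R$.

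The hard part will be the two estimates feeding the union bound. The first is the sub-exponential concentration of $\log \deter{G G^{T}}$ with the correct $t^{2}/j$ scaling: this is what decides whether the final bound is $k^{2}$ or $k^{3}$, and the constants must be strong enough to survive a union bound over $m^{k}$ sets. The second, more delicate, point is the passage to the exactly one-sided bound. A single scaled Gaussian cannot place the distortion of \emph{every} size $j \in \{1, \dots, k\}$ into $[1, 1+\eps]$ with room to spare, since a scalar rescaling of $R$ multiplies the size-$j$ distortion by $c^{2j}$ and so cannot neutralize the size-dependent bias $\Theta(j^{2}/d)$ and fluctuation $\Theta(\sqrt{j/d})$ simultaneously. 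The clean one-sided statement is therefore easiest to read off for subsets of a common size, which is the regime the volume-sampling application of Theorem~\ref{thm:fast-VS} operates in, with the general normalization handled by the more careful scaling of Magen and Zouzias.
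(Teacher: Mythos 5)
First, a point of reference: the paper does not prove this theorem at all --- it is imported verbatim (up to substituting $\eps/k$ for $\eps$) from Theorem 1 of Magen and Zouzias \cite{MZ}, so there is no in-paper argument to compare against. Judged on its own terms, the core of your argument is sound and is essentially the right route: rotation invariance reduces the distortion of a fixed size-$j$ subset to $\deter{G G^{T}}$ with $G$ an i.i.d.\ $j \times d$ Gaussian matrix (with the rank-deficient case handled trivially), the Bartlett decomposition writes $\deter{G G^{T}}$ as $d^{-j}\prod_{\ell=0}^{j-1}\chi^{2}_{d-\ell}$ with independent factors, and Bernstein plus a union bound over at most $(m+1)^{k}$ subsets gives, at $d = O(k^{2}\log m/\eps^{2})$, the two-sided guarantee $\deter{\tilde{A}_{S}\tilde{A}_{S}^{T}} = (1 \pm O(\eps))\deter{A_{S}A_{S}^{T}}$ for all $\card{S}\le k$. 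Your observation that one must concentrate the determinant rather than every singular value to avoid an extra factor of $k$ is also correct.

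The gap is the last step. The statement to be proved is the \emph{one-sided} inequality for all $\card{S}\le k$, and your final paragraph concedes it only for subsets of a common size, deferring the rest to \cite{MZ}; moreover, the obstruction you cite is not real --- it is an artifact of your choice of a single deviation threshold $t = \Theta(\eps)$ for every size. The union bound does not force that choice: for the $\binom{m}{j}\le m^{j}$ subsets of size $j$ you only need per-subset failure probability about $m^{-j}$, and your own Bernstein bound then permits the size-dependent threshold $t_{j} = \Theta(j\eps/k)$ at the same $d = O(k^{2}\log m/\eps^{2})$, since $d\,t_{j}^{2}/j \asymp j\log m$. On that good event the log-distortion of every size-$j$ subset lies in $[\mu_{j}-t_{j},\,\mu_{j}+t_{j}]$ with $\abs{\mu_{j}} = O(jk/d) = o(j\eps/k)$, and now a single rescaling $R \mapsto cR$ with $c^{2} = e^{\Theta(\eps/k)}$ does work: it shifts the size-$j$ log-distortion by $2j\log c$, which is linear in $j$ exactly as $t_{j}$ is, so \emph{all} distortions land in $[1, 1+O(\eps)]$ simultaneously; running the argument with $\eps$ scaled down by a constant finishes. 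So the one-sided form is provable entirely within your framework, and the appeal to ``the more careful scaling of Magen and Zouzias'' is unnecessary --- indeed MZ's original guarantee is of a per-dimension type (which is why the paper's restatement substitutes $\eps/k$ for $\eps$), and converting it to the form stated here is precisely this kind of rescaling.
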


\begin{theorem}[same as Theorem \ref{thm:fast-VS}]
Using random projection for dimensionality reduction, the polynomial time algorithm for volume sampling mentioned in Theorem \ref{thm:polytimeVS} (i.e., Algorithm \ref{alg:outline} with Algorithm \ref{alg:omegaSub} as its subroutine), gives $(1+\eps)$-approximate volume sampling, using
\[
O\left(mn \log m \cdot \frac{k^{2}}{\eps^{2}} + m \log^{\omega} m \cdot \frac{k^{2\omega + 1}}{\eps^{2\omega}} \log(k \eps^{-1} \log m) \right).
\]
arithmetic operations.
\end{theorem}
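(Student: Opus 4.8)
The plan is to compose the two ingredients directly: use the volume-preserving random projection of Theorem \ref{thm:random-projection} as a preprocessing step, and then run the exact volume sampling algorithm of Theorem \ref{thm:polytimeVS} on the resulting low-dimensional matrix. First I would apply the map $f$ of Theorem \ref{thm:random-projection} to obtain $\tilde{A} \in \reals^{m \times d}$ with $d = O(k^{2} \log m / \eps^{2})$, which costs $O(mnd)$ arithmetic operations for the matrix multiplication. The guarantee carried forward is that $\deter{A_{S} A_{S}^{T}} \le \deter{\tilde{A}_{S} \tilde{A}_{S}^{T}} \le (1+\eps) \deter{A_{S} A_{S}^{T}}$ simultaneously for every $S \subseteq [m]$ with $\card{S} \le k$.

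Next I would invoke Theorem \ref{thm:polytimeVS} on $\tilde{A}$, treating it as an $m$-by-$d$ matrix. This returns a random $k$-subset $S$ with probability exactly proportional to $\deter{\tilde{A}_{S} \tilde{A}_{S}^{T}}$, using $O(kmd^{\omega} \log d)$ arithmetic operations. For correctness I would compare the output distribution $q(S) \propto \deter{\tilde{A}_{S} \tilde{A}_{S}^{T}}$ against the target volume sampling distribution $p(S) \propto \deter{A_{S} A_{S}^{T}}$. Since both the numerator and the normalizing sum over all $k$-subsets are distorted by the same multiplicative window $[1, 1+\eps]$, the ratio $q(S)/p(S)$ lies in $[(1+\eps)^{-1}, 1+\eps]$ for every such $S$, which is exactly the $(1+\eps)$-approximate volume sampling guarantee.

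The running time then follows by substituting $d = O(k^{2} \log m / \eps^{2})$. The projection term is $O(mnd) = O(mn \log m \cdot k^{2}/\eps^{2})$. For the sampling term, $d^{\omega} = O(k^{2\omega} \log^{\omega} m / \eps^{2\omega})$ and $\log d = O(\log(k \eps^{-1} \log m))$, so $O(kmd^{\omega} \log d) = O(m \log^{\omega} m \cdot k^{2\omega+1}/\eps^{2\omega} \cdot \log(k \eps^{-1} \log m))$; summing the two terms gives the claimed bound.

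The only real subtlety — and the step I would be most careful about — is the correctness argument. Because the same multiplicative window distorts both the numerator $\deter{A_{S} A_{S}^{T}}$ and the normalizing sum $\sum_{S'} \deter{A_{S'} A_{S'}^{T}}$, the two distortions act in opposite directions in the ratio $q(S)/p(S)$ and therefore stay within $[(1+\eps)^{-1}, 1+\eps]$ rather than compounding to $(1+\eps)^{2}$. I would also confirm that the working definition of $(1+\eps)$-approximate volume sampling matches this two-sided multiplicative bound, and note that Theorem \ref{thm:random-projection} only controls determinants for $\card{S} \le k$, which is precisely the range the sampling algorithm touches.
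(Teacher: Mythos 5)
Your proposal is correct and follows essentially the same route as the paper: apply the volume-preserving random projection of Theorem \ref{thm:random-projection} to get $\tilde{A} \in \reals^{m \times d}$ in $O(mnd)$ operations, then run the exact algorithm of Theorem \ref{thm:polytimeVS} on $\tilde{A}$ and substitute $d = O(k^2 \log m/\eps^2)$. In fact you are somewhat more careful than the paper's own proof, which asserts the $(1+\eps)$-approximation without spelling out the two-sided distortion argument for the ratio $q(S)/p(S)$ and omits the $\log d = O(\log(k\eps^{-1}\log m))$ factor from its displayed running time even though it appears in the theorem statement.
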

\begin{proof}
Using Theorem \ref{thm:random-projection} and doing volume sampling of $k$-subsets of rows from $\tilde{A}$ gives $(1+\eps)$-approximation to the volume sampling of $k$-subsets of rows from $A$. This can be done in two steps: first, we compute $\tilde{A}$ using matrix multiplication in time $O(mnd)$ and second, we do volume sampling on $\tilde{A}$ using the algorithm from Subsection \ref{subsec:no-svd}. Overall, it takes time $O(mnd + kmd^{\omega} \log d)$, which is equal to
\[
O\left(mn \log m \cdot \frac{k^{2}}{\eps^{2}} + m \log^{\omega} m \cdot \frac{k^{2\omega + 1}}{\eps^{2\omega}}\right).
\]
Moreover, this can be implemented using only one pass over the matrix $A$ with extra space $m \log m \cdot k^{2}/\eps^{2}$.
\end{proof}

\section{Derandomized row/column-subset selection} \label{sec:derand}
Our derandomized row-subset selection algorithm is based on a derandomization of the volume sampling algorithm in Section \ref{sec:algo-outline}, using the method of conditional expectations. Again, it may be easier to consider volume sampling extended to random $k$-tuples $(X_{1}, \dotsc, X_{k})$ where
\[
\prob{X_{1} = i_{1}, \dotsc, X_{k} = i_{k}} = \begin{cases} \dfrac{\deter{A_{\{i_{1}, \dotsc, i_{k}\}} A_{\{i_{1}, \dotsc, i_{k}\}}^{T}}}{k! \sum_{S \subseteq [m] \suchthat \size{S}=k} \deter{A_{S} A_{S}^{T}}} & \text{if $i_{1}, \dotsc, i_{k}$ are distinct} \\ \quad & \quad \\ 0 & \text{otherwise} \end{cases}
\]
From Theorem \ref{thm:vol-sampling} we know that
\[
\expec{\frob{A - \pi_{\{X_{1}, \dotsc, X_{k}\}}(A)}^{2}} \leq (k+1) \frob{A - A_{k}}^{2},
\]
where the expectation is over $(X_{1}, \dotsc, X_{k})$.

Let us consider $i_{1}, \dotsc, i_{t-1}$ for which $\prob{X_{1} = i_{1}, \dotsc, X_{t-1} = i_{t-1}} > 0$. Let $S = \{i_{1}, \dotsc, i_{t-1}\}$ and look at the conditional expectation. The following lemma shows that these conditional expectations have an easy interpretation in terms of the coefficients of certain characteristic polynomials, and hence can be computed efficiently.

\begin{lemma} \label{lemma:conditional}
Let $(i_{1}, \dotsc, i_{t-1}) \in [m]^{t-1}$ be such that $\prob{X_{1} = i_{1}, \dotsc, X_{t-1} = i_{t-1}} > 0$ for a random $k$-tuple $(X_{1}, X_{2}, \dotsc, X_{k})$ from extended volume sampling. Let $S = \{i_{1}, \dotsc, i_{t-1}\}$ and $B = A - \pi_{S}(A)$. Then
\[
\expec{\frob{A - \pi_{\{X_{1}, \dotsc, X_{k}\}}(A)}^{2} \given X_{1} = i_{1}, \dotsc, X_{t-1} = i_{t-1}} = \frac{(k-t+2) c_{m-k+t-2} (BB^{T})}{c_{m-k+t-1} (BB^{T})}.
\]
\end{lemma}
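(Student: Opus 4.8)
The plan is to reduce the conditional expectation to an ordinary (unconditional) expected Frobenius error of volume sampling on the projected matrix $B = A - \pi_{S}(A)$, and then to evaluate that expectation exactly. Write $k' = k - t + 1$ for the number of indices still to be chosen and $T = \{X_{t}, \dotsc, X_{k}\}$. The first step is a projection identity: since $\linspan{S}\subseteq\linspan{S\cup T}$, projecting a row orthogonal to $\linspan{S\cup T}$ is the same as first projecting it orthogonal to $\linspan{S}$ (producing the corresponding row of $B$) and then projecting orthogonal to $\linspan{B_{T}}$. Hence $A - \pi_{\{X_{1},\dotsc,X_{k}\}}(A) = B - \pi_{T}(B)$ and $\frob{A - \pi_{\{X_{1},\dotsc,X_{k}\}}(A)}^{2} = \frob{B - \pi_{T}(B)}^{2}$. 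Together with Lemma \ref{lemma:det-division} (which shows the conditional weight of choosing $T$ factors as $\det(A_{S} A_{S}^{T})\det(B_{T} B_{T}^{T})$, the first factor being constant once $S$ is fixed), this shows that, conditioned on $X_{1}=i_{1},\dotsc,X_{t-1}=i_{t-1}$, the set $T$ is distributed as a $k'$-subset chosen by volume sampling from $B$. So the conditional expectation equals $\expec{\frob{B - \pi_{T}(B)}^{2}}$ over that volume sampling.

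The second step evaluates this expectation exactly. I would first prove the pointwise identity
\[
\det(B_{T} B_{T}^{T})\,\frob{B - \pi_{T}(B)}^{2} = \sum_{i \notin T} \det\bigl(B_{T\cup\{i\}} B_{T\cup\{i\}}^{T}\bigr),
\]
which follows by applying Lemma \ref{lemma:det-division} with the fixed set $T$ and the singleton $\{i\}$: the right-hand summand equals $\det(B_{T} B_{T}^{T})$ times the squared distance of $b_{i}$ from $\linspan{B_{T}}$, and summing these squared distances over all $i$ reconstructs $\frob{B - \pi_{T}(B)}^{2}$ (the terms $i\in T$ contribute zero on both sides). Summing this identity over all $T$ with $\card{T}=k'$ and exchanging the order of summation, each $(k'+1)$-subset $U$ is produced exactly $k'+1$ times (once as $U=T\cup\{i\}$ for each $i\in U$), so the numerator of the expectation becomes $(k'+1)\sum_{\card{U}=k'+1}\det(B_{U} B_{U}^{T})$.

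Finally I would translate both the numerator and the denominator into characteristic-polynomial coefficients via Lemma \ref{lemma:char-poly}: $\sum_{\card{U}=k'+1}\det(B_{U} B_{U}^{T}) = \abs{c_{m-k'-1}(BB^{T})}$ and $\sum_{\card{T}=k'}\det(B_{T} B_{T}^{T}) = \abs{c_{m-k'}(BB^{T})}$. Dividing gives $(k'+1)\,\abs{c_{m-k'-1}(BB^{T})}/\abs{c_{m-k'}(BB^{T})}$, and substituting $k'=k-t+1$ yields exactly the claimed ratio, with indices $m-k+t-2$ and $m-k+t-1$ and leading factor $k-t+2$ (up to the convention on absolute values, which are positive sums of principal minors by Lemma \ref{lemma:char-poly}).

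I expect the main obstacle to be the first step, namely justifying both halves of the reduction simultaneously: that the Frobenius error genuinely splits through $B$ (the projection-decomposition identity), and that the conditional law of the remaining indices is \emph{exactly} volume sampling on $B$ rather than merely proportional to it. Both rest on Lemma \ref{lemma:det-division}, and the normalizing constant $\det(A_{S} A_{S}^{T})$ must be seen to cancel between the conditional numerator and denominator, so that the dependence on $A$ enters only through $B$. The remaining counting and the passage to characteristic-polynomial coefficients are routine given Lemmas \ref{lemma:det-division} and \ref{lemma:char-poly}.
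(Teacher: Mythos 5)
Your proof is correct, and it reorganizes the paper's argument rather than repeating it. The paper proves the lemma by one monolithic chain of equalities that stays with the original matrix $A$ as long as possible: it first converts $\frob{A-\pi_{\{i_{1},\dotsc,i_{k}\}}(A)}^{2}\cdot\deter{A_{\{i_{1},\dotsc,i_{k}\}}A_{\{i_{1},\dotsc,i_{k}\}}^{T}}$ into $\sum_{l\notin\{i_{1},\dotsc,i_{k}\}}\deter{A_{\{l,i_{1},\dotsc,i_{k}\}}A_{\{l,i_{1},\dotsc,i_{k}\}}^{T}}$ --- your pointwise identity, but applied to $A$ and the full index set --- and only afterwards uses Lemma \ref{lemma:det-division} to peel off the constant factor $\deter{A_{S}A_{S}^{T}}$, then counts and applies Lemma \ref{lemma:char-poly}. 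You perform the same two moves in the opposite order: first pass to $B$ (both for the law and, via the projection identity $A-\pi_{S\cup T}(A)=B-\pi_{T}(B)$, for the objective), then apply the norm-to-determinant identity on $B$. Both obstacles you flag do go through: the projection identity holds because $\linspan{S\cup T}$ is the orthogonal direct sum of $\linspan{S}$ and the span of $\{b_{j}\suchthat j\in T\}$ (the paper itself asserts the singleton case $C_{i}=B-\pi_{\{i\}}(B)=A-\pi_{S\cup\{i\}}(A)$ in Lemma \ref{lemma:marginals}), and the conditional law of $T$ is \emph{exactly} volume sampling on $B$, not merely proportional to it on sets avoiding $S$, because $k$-subsets meeting $S$ receive zero weight under volume sampling on $B$ anyway (the corresponding rows of $B$ vanish). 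What your organization buys is modularity: it isolates the structural fact that volume sampling conditioned on its first $t-1$ picks is volume sampling on the projected matrix --- a restriction property in the spirit of determinantal processes --- and thereby reduces general $t$ to the unconditional case $t=1$ run on $B$. What the paper's single computation buys is economy: it never needs the projection identity or the distributional claim as standalone statements, since the factor $\deter{A_{S}A_{S}^{T}}$ cancels inside one fraction. One cosmetic caveat you share with the paper: the coefficients $c_{m-j}(BB^{T})$ alternate in sign, so the final ratio should be read with absolute values (as in the last line of the paper's proof, though not in the lemma statement itself); your parenthetical remark about absolute values covers this.
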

\begin{proof}
\begin{align*}
& \expec{\frob{A - \pi_{\{X_{1}, \dotsc, X_{k}\}}(A)}^{2} \given X_{1} = i_{1}, \dotsc, X_{t-1} = i_{t-1}} \\
& = \sum_{(i_{t}, \dotsc, i_{k}) \in [m]^{k-t+1}} \frob{A - \pi_{\{i_{1}, \dotsc, i_{k}\}}(A)}^{2} \prob{X_{1} = i_{1}, \dots, X_{k} = i_{k} \given X_{1} = i_{1}, \dotsc, X_{t-1} = i_{t-1}} \\
& = \sum_{(i_{t}, \dotsc, i_{k}) \in [m]^{k-t+1}} \frob{A - \pi_{\{i_{1}, \dotsc, i_{k}\}}(A)}^{2} \frac{\prob{X_{1} = i_{1}, \dotsc, X_{k} = i_{k}}}{\prob{X_{1} = i_{1}, \dotsc, X_{t-1} = i_{t-1}}} \\
& = \sum_{(i_{t}, \dotsc, i_{k}) \in [m]^{k-t+1}} \frac{\sum_{l=1}^{m} \norm{d_{l}}^{2} \deter{A_{\{i_{1}, \dotsc, i_{k}\}} A_{\{i_{1}, \dotsc, i_{k}\}}^{T}}}{\sum_{(j_{t}, \dotsc, j_{k}) \in [m]^{k-t+1}} \deter{A_{\{i_{1}, \dotsc, i_{t-1}, j_{t}, \dots, j_{k}\}} A_{\{i_{1}, \dotsc, i_{t-1}, j_{t}, \dots, j_{k}\}}^{T}}} \\
& \qquad \text{where $D = A - \pi_{\{i_{1}, \dotsc, i_{k}\}}(A)$} \\
& = \frac{\sum_{(i_{t}, \dotsc, i_{k}) \in [m]^{k-t+1}} \sum_{l \notin \{i_{1}, \dotsc, i_{k}\}} \deter{A_{\{l, i_{1}, \dotsc, i_{k}\}} A_{\{l, i_{1}, \dotsc, i_{k}\}}^{T}}}{\sum_{(j_{t}, \dotsc, j_{k}) \in [m]^{k-t+1}} \deter{A_{\{i_{1}, \dotsc, i_{t-1}, j_{t}, \dots, j_{k}\}} A_{\{i_{1}, \dotsc, i_{t-1}, j_{t}, \dots, j_{k}\}}^{T}}} \\
& = \frac{(k-t+1)! \sum_{T \subseteq [m] \suchthat \size{S \cup T}=k, \size{T}=k-t+1} \sum_{l \notin S \cup T} \deter{A_{\{l\} \cup S \cup T} A_{\{l\} \cup S \cup T}^{T}}}{(k-t+1)! \sum_{T \subseteq [m] \suchthat \size{S \cup T}=k, \size{T}=k-t+1} \deter{A_{S \cup T} A_{S \cup T}^{T}}} \\
& = \frac{\sum_{T \subseteq [m] \suchthat \size{S \cup T}=k, \size{T}=k-t+1} \sum_{l \notin S \cup T} \deter{A_{S} A_{S}^{T}} \deter{B_{\{l\} \cup T} B_{\{l\} \cup T}^{T}}}{\sum_{T \subseteq [m] \suchthat \size{S \cup T}=k, \size{T}=k-t+1} \deter{A_{S} A_{S}^{T}} \deter{B_{T} B_{T}^{T}}} \quad \text{by Lemma \ref{lemma:det-division}} \\
& = \frac{\sum_{T \subseteq [m] \suchthat \size{S \cup T}=k, \size{T}=k-t+1} \sum_{l \notin S \cup T} \deter{B_{\{l\} \cup T} B_{\{l\} \cup T}^{T}}}{\sum_{T \subseteq [m] \suchthat \size{S \cup T}=k, \size{T}=k-t+1} \deter{B_{T} B_{T}^{T}}} \\
& = \frac{(k-t+2) \sum_{T \subseteq [m] \suchthat \size{S \cup T}=k+1, \size{T}=k-t+2} \deter{B_{T} B_{T}^{T}}}{\sum_{T \subseteq [m] \suchthat \size{S \cup T}=k, \size{T}=k-t+1} \deter{B_{T} B_{T}^{T}}} \\
& = \frac{(k-t+2) \sum_{T \subseteq [m] \suchthat \size{T}=k-t+2} \deter{B_{T} B_{T}^{T}}}{\sum_{T \subseteq [m] \suchthat \size{T}=k-t+1} \deter{B_{T} B_{T}^{T}}} \quad \substack{\text{the extra terms in the numerator and} \\ \text{the denominator are zero}} \\
& = \frac{(k-t+2) \abs{c_{m-k+t-2} (BB^{T})}}{\abs{c_{m-k+t-1} (BB^{T})}} \quad \text{by Lemma \ref{lemma:char-poly}}.
\end{align*}
\end{proof}

Knowing the above lemma, it is easy to derandomize our algorithm outlined for volume sampling. In each step, we just compute the new conditional expectations for each additional $i$, and finally pick the $i$ that minimizes the conditional expectation.

\begin{framed}
\begin{alg} \label{alg:derand}
{\bf Derandomized row/column-subset selection}
\end{alg}
\noindent Input: a matrix $A \in \reals^{m \times n}$ and $1 \leq k \leq \rank(A)$.

\noindent Output: a subset $S$ of $k$ rows of $A$ with the guarantee
\[
\frob{A - \pi_{S}(A)}^{2} \leq (k+1) \frob{A - A_{k}}^{2}.
\]

\begin{enumerate}
\item Initialize $S \leftarrow \emptyset$ and $B \leftarrow A$. For $t=1$ to $k$ do:
\begin{enumerate}
\item For $i=1$ to $m$ do: compute $c_{n-k+t-1} (C_{i}^{T} C_{i})$ and $c_{n-k+t} (C_{i}^{T} C_{i})$, where $C_{i} = B - \pi_{\{i\}}(B)$ is the matrix obtained by projecting each row of $B$ orthogonal to $b_{i}$.
\item Pick $i$ that minimizes $\abs{c_{n-k+t-1} (C_{i}^{T} C_{i})} / \abs{c_{n-k+t} (C_{i}^{T} C_{i})}$. Let $S \leftarrow S \cup \{i\}$ and $B \leftarrow C_{i}$.
\end{enumerate}
\item Output $S$.
\end{enumerate}
\end{framed}

\begin{theorem}[same as Theorem \ref{thm:det-sub-select}]
Deterministic Algorithm \ref{alg:derand}, when given a matrix $A \in \reals^{m \times n}$ and an integer $1 \leq k \leq \rank(A)$, outputs a $k$-subset $S$ of the rows of $A$, using $O(kmn^{\omega} \log n)$ arithmetic operations, such that
\begin{align*}
\frob{A - \pi_{S}(A)} & \leq \sqrt{k+1} \frob{A - A_{k}} \\
\norm{A - \pi_{S}(A)}_{2} & \leq \sqrt{(k+1)(n-k)} \norm{A - A_{k}}_{2}.
\end{align*}
\end{theorem}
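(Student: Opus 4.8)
The plan is to handle the three assertions in turn. The Frobenius guarantee is the substance and comes from derandomizing the volume-sampling process via the method of conditional expectations; the spectral guarantee then follows mechanically; and the arithmetic-operation count is essentially the one already established for Algorithm~\ref{alg:omegaSub}.

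For the Frobenius bound I would take as base case the inequality $\expec{\frob{A - \pi_{\{X_1,\dots,X_k\}}(A)}^2} \le (k+1)\frob{A - A_k}^2$ from Theorem~\ref{thm:vol-sampling}, where $(X_1,\dots,X_k)$ is the extended volume sampling over $k$-tuples. The engine is Lemma~\ref{lemma:conditional}: conditioned on a prefix $X_1=i_1,\dots,X_{t-1}=i_{t-1}$ with $S=\{i_1,\dots,i_{t-1}\}$ and $B=A-\pi_S(A)$, the conditional expectation of the residual equals $(k-t+2)\abs{c_{m-k+t-2}(BB^T)}/\abs{c_{m-k+t-1}(BB^T)}$. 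Applying this lemma one level deeper, i.e. to the prefix extended by an index $i$ (so $S$ becomes $S\cup\{i\}$, $B$ becomes $C_i=B-\pi_{\{i\}}(B)$, and $t$ increases by one), shows that committing to $X_t=i$ produces conditional expectation $(k-t+1)\abs{c_{m-k+t-1}(C_iC_i^T)}/\abs{c_{m-k+t}(C_iC_i^T)}$. The prefactor $(k-t+1)$ is independent of $i$, so minimizing the conditional expectation over $i$ is the same as minimizing the ratio $\abs{c_{m-k+t-1}(C_iC_i^T)}/\abs{c_{m-k+t}(C_iC_i^T)}$, which by Lemma~\ref{lemma:char-poly} equals $\abs{c_{n-k+t-1}(C_i^TC_i)}/\abs{c_{n-k+t}(C_i^TC_i)}$ --- precisely the quantity Algorithm~\ref{alg:derand} minimizes at step $t$. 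The tower property writes the prefix's conditional expectation as a convex combination of the one-step-deeper ones, so the greedy minimizer never increases it; telescoping over $t=1,\dots,k$ converts the base-case inequality into the deterministic conclusion $\frob{A-\pi_S(A)}^2 \le (k+1)\frob{A-A_k}^2$, noting that once all $k$ indices are fixed the conditional expectation collapses to $\abs{c_{m-1}(B_SB_S^T)}/\abs{c_m(B_SB_S^T)}=\trace(B_SB_S^T)=\frob{A-\pi_S(A)}^2$, with $B_S=A-\pi_S(A)$.

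I expect the main obstacle to be the bookkeeping that keeps every partial choice in the support of the distribution, since Lemma~\ref{lemma:conditional} is stated only when the conditioning event has positive probability. I would argue that the greedy rule never selects a degenerate index: an index $i$ with zero conditional probability has either $\norm{b_i}=0$ or $c_{m-k+t}(C_iC_i^T)=0$, and in the latter case the minimized ratio has a vanishing denominator and is therefore $+\infty$, so it is never chosen while a genuine candidate survives. A genuine candidate does survive at every step because $k\le\rank(A)$ forces $\abs{c_{m-k+t-1}(BB^T)}>0$ throughout the run, i.e. $B$ retains enough rank to supply a $(k-t+1)$-subset of positive volume. Ruling out re-selection of already-chosen (hence zero) rows, so that $S$ ends with $k$ linearly independent rows and the collapsed $t=k$ identity genuinely equals the residual, is the delicate part I would write out in full.

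The spectral bound is then immediate: for $M=A-\pi_S(A)$ one has $\norm{M}_2\le\frob{M}$, while $\norm{A-A_k}_2=\sigma_{k+1}$ and, because $m\ge n$ bounds the number of nonzero singular values of $A$ by $n$, $\frob{A-A_k}^2=\sum_{i=k+1}^{n}\sigma_i^2\le(n-k)\sigma_{k+1}^2=(n-k)\norm{A-A_k}_2^2$; chaining these with the Frobenius guarantee gives $\norm{A-\pi_S(A)}_2\le\sqrt{k+1}\,\frob{A-A_k}\le\sqrt{(k+1)(n-k)}\,\norm{A-A_k}_2$. For the running time, each of the $k$ rounds performs exactly the work analyzed in Proposition~\ref{prop:omegaVS}: form $B^TB$ once in $O(mn^2)$, obtain every $C_i^TC_i$ by the rank-$1$ update in $O(n^2)$ each, and compute the characteristic polynomial of each $C_i^TC_i$ in $O(n^\omega\log n)$ (both needed coefficients being read off the same polynomial), for $O(mn^\omega\log n)$ per round and $O(kmn^\omega\log n)$ in total.
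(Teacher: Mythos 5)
Your proposal is correct and follows essentially the same route as the paper's own proof: the base inequality from Theorem~\ref{thm:vol-sampling}, Lemma~\ref{lemma:conditional} applied one level deeper to identify the algorithm's selection criterion with the conditional expectation (up to the $i$-independent factor $k-t+1$, converted via Lemma~\ref{lemma:char-poly}), the tower property plus greedy minimization to telescope, and identical arguments for the spectral bound and the $O(kmn^{\omega}\log n)$ operation count. Your additional attention to degenerate indices (zero conditional probability, re-selection of spanned rows) is care the paper's terse proof does not spell out, so no gap relative to the paper.
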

\begin{proof}
By applying Lemma \ref{lemma:conditional} to $S \cup \{i\}$ instead of $S$, as $C_{i} = B - \pi_{\{i\}}(B) = A - \pi_{S \cup \{i\}}(A)$, we see that the step $t$ of our algorithm picks $i$ that minimizes
\begin{align*}
\expec{\frob{A - \pi_{\{X_{1}, \dotsc, X_{k}\}}(A)}^{2} \given X_{1} = i_{1}, \dots, X_{t-1} = i_{t-1}, X_{t} = i} & = \frac{(k-t+1) \abs{c_{n-k+t-1} (C_{i}^{T} C_{i})}}{\abs{c_{n-k+t} (C_{i}^{T} C_{i})}} \\
 & = \frac{(k-t+1) \abs{c_{m-k+t-1} (C_{i} C_{i}^{T})}}{\abs{c_{m-k+t} (C_{i} C_{i}^{T})}}.
\end{align*}
The correctness of our algorithm follows immediately from observing that in each step $t$,
\begin{align*}
& \expec{\frob{A - \pi_{\{X_{1}, \dotsc, X_{k}\}}(A)}^{2} \given X_{1} = i_{1}, \dots, X_{t-1} = i_{t-1}} \\
& = \sum_{i=1}^{m} \prob{X_{t} = i \given X_{1} = i_{1}, \dotsc, X_{t-1} = i_{t-1}} \expec{\frob{A - \pi_{\{X_{1}, \dotsc, X_{k}\}}(A)}^{2} \given X_{1} = i_{1}, \dots, X_{t-1} = i_{t-1}, X_{t} = i}
\end{align*}
and that we started with
\[
\expec{\frob{A - \pi_{\{X_{1}, \dotsc, X_{k}\}}(A)}^{2}} \leq (k+1) \frob{A - A_{k}}^{2}.
\]

The guarantee for spectral norm follows immediately from our guarantee in the Frobenius norm, just using properties of norms and the fact that $\rank(A - A_{k}) \leq n-k$:
\[
\snorms{A - \pi_{S}(A)} \leq \norms{A - \pi_{S}(A)}_F \leq (k+1) \norms{A - A_{k}}_F \leq (k+1)(n-k)\snorms{A - A_{k}}.
\]

Moreover, this algorithm runs in time $O(kmn^{\omega} \log n)$ if we use the subroutine in Subsection \ref{subsec:no-svd} to compute the characteristic polynomial of $C_{i} C_{i}^{T}$ using that of $C_{i}^{T} C_{i}$.
\end{proof}

\section{Lower bound for rank-$1$ spectral approximation using one row}
Here we show a lower bound for row/column-subset selection. We prove that there is a matrix $A \in \reals^{n \times (n+1)}$ such that using the span of any single row of it, we can get only $\Omega(\sqrt{n})$-approximation in the spectral norm for the nearest rank-$1$ matrix to $A$. This can be generalized to a similar $\Omega(\sqrt{n})$ lower bound for general $k$ by using a matrix with $k$ block-diagonal copies of $A$.

\begin{theorem}[same as Theorem \ref{thm:lower-bound}]
There exists a matrix $A \in \reals^{n \times (n+1)}$ such that
\[
\norm{A - \pi_{\{i\}}(A)}_{2} = \Omega(\sqrt{n}) \norm{A - A_{1}}_{2}, \quad \text{for all $1 \leq i \leq n$},
\]
where $\pi_{\{i\}}(A) \in \reals^{n \times (n+1)}$ is the matrix obtained by projecting each row of $A$ onto the span of its $i$-th row $a_{i}$.
\end{theorem}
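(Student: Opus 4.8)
The plan is to exhibit one explicit matrix and compute both quantities by hand. I would take $A \in \reals^{n \times (n+1)}$ whose $i$-th row is $a_{i} = e_{i} + c\, e_{n+1}$ for a fixed constant $c$ (e.g. $c=1$), where $e_{1}, \dotsc, e_{n+1}$ is the standard basis of $\reals^{n+1}$; equivalently $A = [\,I_{n} \mid c\mathbf{1}\,]$, the $n \times n$ identity with one appended column all equal to $c$. This matrix is invariant under simultaneously permuting its rows and its first $n$ coordinates, so $\norm{A - \pi_{\{i\}}(A)}_{2}$ is the same for every $i$, and it suffices to analyze a single fixed $i$. The intuition guiding the choice is that the dominant singular direction is proportional to $\mathbf{1}$ and is equiangular to all rows, so no single row can capture it well.

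First I would pin down the best rank-$1$ error. Since $A A^{T} = I_{n} + c^{2}\,\mathbf{1}\mathbf{1}^{T}$, its eigenvalues are $1 + c^{2} n$ (once, along $\mathbf{1}$) and $1$ (with multiplicity $n-1$), so the singular values of $A$ are $\sigma_{1} = \sqrt{1 + c^{2} n}$ and $\sigma_{2} = \dotsb = \sigma_{n} = 1$. Hence $\norm{A - A_{1}}_{2} = \sigma_{2} = 1$.

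Next I would compute the residual $R := A - \pi_{\{i\}}(A)$. Because $\linspan{\{a_{i}\}}$ is one-dimensional, each row $a_{j}$ is replaced by $\frac{\inner{a_{j}}{a_{i}}}{\norm{a_{i}}^{2}}\, a_{i}$; using $\inner{a_{j}}{a_{i}} = \delta_{ij} + c^{2}$ and $\norm{a_{i}}^{2} = 1 + c^{2}$, the $i$-th row of $R$ vanishes, while for $j \neq i$
\[
R_{j} = e_{j} - \frac{c^{2}}{1+c^{2}}\, e_{i} + \frac{c}{1+c^{2}}\, e_{n+1}.
\]
The key step is the spectral norm of $R$, which I would extract from $R R^{T}$: its $i$-th row and column are zero, and for $j, j' \neq i$ a short inner-product calculation gives $(R R^{T})_{jj'} = \delta_{jj'} + \frac{c^{2}}{1+c^{2}}$. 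Thus on its nonzero $(n-1)\times(n-1)$ block, $R R^{T} = I_{n-1} + \frac{c^{2}}{1+c^{2}}\,\mathbf{1}\mathbf{1}^{T}$, whose largest eigenvalue is $1 + \frac{c^{2}}{1+c^{2}}(n-1)$.

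Putting these together, $\norm{A - \pi_{\{i\}}(A)}_{2} = \sqrt{1 + \frac{c^{2}}{1+c^{2}}(n-1)}$ while $\norm{A - A_{1}}_{2} = 1$, so the ratio is $\Omega(\sqrt{n})$ for any fixed $c>0$ (already $c=1$ yields $\sqrt{(n+1)/2}$). The only step needing care is the final eigenvalue computation, and it too is routine once one recognizes both $A A^{T}$ and $R R^{T}$ as a multiple of the identity plus a rank-one all-ones term, for which the spectrum is immediate; I therefore do not expect a genuine obstacle, only bookkeeping.
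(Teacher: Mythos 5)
Your proposal is correct and follows essentially the same route as the paper: your matrix $[\,I_{n} \mid c\mathbf{1}\,]$ is, after rescaling by $1/c$ and permuting coordinates, exactly the paper's example $[\,\mathbf{1} \mid \eps I_{n}\,]$ with $\eps = 1/c$, and both arguments compute $\norm{A-A_{1}}_{2}$ from the spectrum of $AA^{T}$ (a multiple of the identity plus a rank-one all-ones matrix) and then bound $\norm{A-\pi_{\{i\}}(A)}_{2}$ via the top eigenvalue of the residual Gram matrix, which has the same identity-plus-all-ones structure. The only cosmetic difference is that you read off the spectrum of $AA^{T}$ directly, whereas the paper identifies $\sigma_{2}$ by a symmetry-plus-trace argument; both are routine and the calculations check out.
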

\begin{proof}
Consider $A \in \reals^{n \times (n+1)}$ with entries as follows:
\[
\left(\begin{array}{ccccc}
1 & \eps & 0 & \dotsc & 0 \\
1 & 0 & \eps & \dotsc & 0 \\
1 & 0 & 0 & \dotsc & 0 \\
1 & 0 & \dotsc & 0 & \eps
\end{array}\right), \qquad 0 < \eps < 1.
\]
Let $B$ be the best rank-$1$ approximation to $A$ whose rows lie in the span of $(1, \eps, 0, \dotsc, 0)$ (or for that matter, any fixed row of $A$). Then, we want to show that
\[
\norm{A - B}_{2} \geq \frac{\sqrt{n}}{2} \norm{A - A_{1}}_{2} = \frac{\sqrt{n}}{2} \sigma_{2}(A).
\]
We first compute the singular values of $A$, i.e., the positive square roots of the eigenvalue of $AA^{T} \in \reals^{n \times n}$.
\[
AA^{T} = \left(\begin{array}{ccccc}
1+\eps^{2} & 1 & 1 & \dotsc & 1 \\
1 & 1+\eps^{2} & 1 & \dotsc & 1 \\
1 & 1 & \dotsc & 1 & 1 \\
1 & \dotsc & 1 & 1+\eps^{2} & 1 \\
1 & \dotsc & 1 & 1 & 1+\eps^{2}
\end{array}\right).
\]
$(1, 1, \dotsc, 1)$ is an eigenvector of $AA^{T}$ with eigenvalue $n+\eps^{2}$. Thus, $\sigma_{1}(A) = \sqrt{n+\eps^{2}}$. Observe that, by symmetry, all other singular values of $A$ must be equal, i.e., $\sigma_{2}(A) = \sigma_{3}(A) = \dotsc = \sigma_{n}(A)$. However,
\[
\frob{A}^{2} = \sum_{ij} A_{ij}^{2} = n+n\eps^{2} = \sum_{i=1}^{n} \sigma_{i}(A)^{2} = \sigma_{1}(A)^{2} + (n-1)\sigma_{2}(A)^{2} = n+\eps^{2} + (n-1)\sigma_{2}(A)^{2}.
\]
Therefore, $\norm{A-A_{1}}_{2} = \sigma_{2}(A) = \eps$.

Now denote the $i$-th row of $A$ by $a_{i}$. By definition, the $i$-th row of $B$ is the projection of $a_{i}$ onto $\linspan{a_{1}}$. We are interested in the singular values of $A-B$. For $i \geq 2$:
\begin{align*}
a_{i} - b_{i} & = a_{i} - \frac{\inner{a_{i}}{a_{1}}}{\norm{a_{1}}^{2}} a_{1} \\
& = \left(\frac{\eps^{2}}{1+\eps^{2}}, \frac{-\eps}{1+\eps^{2}}, 0, \underset{\overbrace{\text{$(i+1)$-th coord.}}}{\eps}, 0, \dotsc, 0\right).
\end{align*}
Thus, $(A-B)(A-B)^{T} \in \reals^{n \times n}$ can be written as
\[
(A-B)(A-B)^{T} = \left(\begin{array}{ccccc}
0 & 0 & \dotsc & 0 & 0 \\
0 & \frac{\eps^{2}(2+\eps^{2})}{1+\eps^{2}} & \frac{\eps^{2}}{1+\eps^{2}} & \dotsc & \frac{\eps^{2}}{1+\eps^{2}} \\
\dotsc & \frac{\eps^{2}}{1+\eps^{2}} & \frac{\eps^{2}(2+\eps^{2})}{1+\eps^{2}} & \dotsc & \frac{\eps^{2}}{1+\eps^{2}} \\
0 & \frac{\eps^{2}}{1+\eps^{2}} & \dotsc & \frac{\eps^{2}(2+\eps^{2})}{1+\eps^{2}} & \frac{\eps^{2}}{1+\eps^{2}} \\
0 & \frac{\eps^{2}}{1+\eps^{2}} & \dotsc & \frac{\eps^{2}}{1+\eps^{2}} & \frac{\eps^{2}(2+\eps^{2})}{1+\eps^{2}}
\end{array}\right).
\]
Again, $(0, 1, 1, \dotsc, 1)$ is the top eigenvector of $(A-B)(A-B)^{T}$ and using this we get,
\[
\norm{A-B}_{2}^{2} = \sigma_{1}(A-B)^{2} = \frac{\eps^{2}(2+\eps^{2})}{1+\eps^{2}} + (n-2) \frac{\eps^{2}}{1+\eps^{2}}.
\]
Therefore,
\[
\norm{A-B}_{2} = \frac{\eps}{\sqrt{1+\eps^{2}}} \sqrt{n+\eps^{2}} \geq \frac{\sqrt{n}}{2} \norm{A-A_{1}}_{2}.
\]
\end{proof}

\section{Discussion}


We analyzed efficient algorithms for volume sampling that can be used for row/column subset selection. Here are some ideas for future investigation suggested by this work:
\begin{itemize}
\item It would be interesting to explore how these algorithmic ideas are related to determinantal sampling \cite{lyons2003determinantal,hough2006determinantal} and, in particular, the generation of random spanning trees.

\item Find practical counterparts of the algorithms discussed here. In particular, we do not analyze the numerical stability of our algorithms.

\item Is there an efficient algorithm for volume sampling based on random walks? This question is inspired by MCMC as well as random walk algorithms for the generation of random spanning trees.
\end{itemize}

\bibliographystyle{abbrv}    

\bibliography{subset-ref}

\end{document}